\title{\bf Socially stable matchings in the \\Hospitals/Residents problem}
\author{Georgios Askalidis$^1$, Nicole Immorlica$^{1,2}$,  Augustine Kwanashie$^3$, \\
David F. Manlove$^4$ and  Emmanouil Pountourakis$^5$\\ \\
{\small \emph{$^{1,2,5}$Department of Electrical Engineering and Computer Science, Northwestern University, USA}}\\
{\small \emph{$^2$Microsoft Research New England, USA}}\\
{\small \emph{$^{3,4}$School of Computing Science, University of Glasgow, UK}} \\
{\small \tt{$^{1}$gask@u.northwestern.edu}, \tt{$^{2}$nicimm@gmail.com},  \tt{$^{3}$a.kwanashie.1@research.gla.ac.uk}}
\\{\small \tt{$^{4}$David.Manlove@glasgow.ac.uk}, \tt{$^{5}$manolis@u.northwestern.edu}}}
\newtheorem{lemma1}{Lemma}[section]
\newtheorem{theorem1}[lemma1]{Theorem}
\newtheorem{proposition1}[lemma1]{Proposition}
\newtheorem{definition1}[lemma1]{Definition}
\newtheorem{claim}[lemma1]{Claim}
\newenvironment{proof}[1][Proof]{\begin{trivlist}
\item[\hskip \labelsep {\bfseries #1}]}{\end{trivlist}}
\newcommand{\qed}{\hfill \ensuremath{\Box}}
\begin{document}
\setcounter{page}{1}
\bibliographystyle{plain}
\maketitle
\pagenumbering{arabic}

%----------------------------------------------------------------------------------------------------------------------------------------------------
\begin{abstract}
% NSI: changed first paragraph of abstract
In the Hospitals/Residents (HR) problem, agents are partitioned into hospitals and residents.  Each agent wishes to be matched to an agent (or agents) in the other set and has a strict preference over these potential matches.  A matching is stable if there are no blocking pairs, i.e., no pair of agents that prefer each other to their assigned matches. Such a situation is undesirable as it could lead to a deviation in which the blocking pair form a private arrangement outside the matching. This however assumes that the blocking pair have social ties or communication channels to facilitate the deviation. Relaxing the stability definition to take account of the potential lack of social ties between agents can yield larger stable matchings.

In this paper, we define the Hospitals/Residents problem under Social Stability (HRSS) which takes into account social ties between agents by introducing a \emph{social network graph} to the HR problem.  Edges in the social network graph correspond to resident-hospital pairs in the HR instance that know one another. Pairs that do not have corresponding edges in the social network graph can belong to a matching $M$ but they can never block $M$. Relative to a relaxed stability definition for HRSS, called \emph{social stability}, we show that socially stable matchings can have different sizes and the problem of finding a maximum socially stable matching is NP-hard, though approximable within $3/2$. Furthermore we give polynomial time algorithms for special cases of the problem.
\end{abstract}

%----------------------------------------------------------------------------------------------------------------------------------------------------
\section{Introduction}
Matching problems generally involve the assignment of a set (or sets) of agents to one another. Agents may be required to list other agents they find acceptable in order of preference, either explicitly or implicitly through a list of desirable characteristics. Agents may also be subject to capacity constraints, indicating the maximum number of assignments they are allowed to be involved in.

An example of such a matching problem that has received much attention in literature is the \emph{Hospitals/Residents problem (HR)} \cite{GS62,GI89,RS90,Man13}. An HR instance consists of a set of \emph{residents} seeking to be matched to a set of \emph{hospitals}.  Each resident ranks a subset of the hospitals in strict order of preference, and vice versa.  Further, each resident forms an \emph{acceptable pair} with every hospital on his preference list.  Finally, each hospital has a \emph{capacity}, indicating the maximum number of residents that it can be assigned.  A \emph{matching} is a set assignments among acceptable pairs such that no resident is assigned to more than one hospital, and no hospital exceeds its capacity.  An acceptable pair forms a \emph{blocking pair} with respect to a matching, or \emph{blocks} a matching, if both agents would rather be assigned to each other than remain with their assignees (if any) in the matching. A matching is \emph{stable} if it admits no blocking pair. 
%
%NSI: add new applications, will reference later
%
HR has a wide range of applications including traditional markets like the assignment of graduating medical students (residents) to hospitals \cite{Irv98,Rot84} and students to high schools \cite{APR05a,APR05}, and online markets like oDesk (an online labour market), AirBnB (an online short-term housing rental market), and Match.com/OkCupid/etc.\ (online dating markets). In applications such as these, it has been convincingly argued that stability is a desirable property of a matching \cite{Rot84}.

Although the concept of stability is important in many applications of matching problems, there are classes of matching problems (such as the Stable Roommates problem) for which an instance is not guaranteed to admit a stable matching \cite{GS62}. Moreover, enforcing the stability requirement tends to reduce the size of matchings discovered \cite{BMM10}. This is an issue particularly in the case of applications where it is desirable to find the largest possible matching. Also, it is generally assumed that a resident-hospital pair that blocks a matching in theory will also block the matching in practice. However this assumption is not always true in some real-life applications, as resident-hospital pairs are more likely to form blocking pairs in practice if social ties exist between them. These factors have motivated studies into alternative, weaker stability definitions that still aim to prevent a given matching from being subverted in practice while increasing the number of agents involved in the matchings.

Arcaute and Vassilvitskii \cite{ES09b} described the Hospitals/Residents problem in the context of assigning job applicants to company positions. They observed that applicants are more likely to be employed by a company if they are recommended by their friends who are already employees of that company. In their model, an applicant-company pair $(a, c)$ may block a matching $M$ if $(a, c)$ blocks it in the traditional sense (as described in the analogous HR context) and $a$ \emph{is friends with} another applicant $a'$ assigned to $c$ in $M$. Thus their problem incorporates both the traditional HR problem and additionally an underlying social network, represented as an undirected graph consisting of applicants as nodes and edges between nodes where the corresponding applicants have some social ties (e.g., are friends). Matchings that admit no blocking pair in this context are called \emph{locally stable} due to the addition of the informational constraint on blocking pairs. Cheng and McDermid \cite{CM12} investigated  the problem (which they called HR+SN) further and established various algorithmic properties and complexity results. They showed that locally stable matchings can be of different sizes and the problem of finding a maximum locally stable matching is NP-hard. They identified special cases where the problem is polynomially solvable and gave upper and lower bounds on the approximability of the problem.

While the HR+SN model is quite natural in the job market, it makes an assumption that the employed applicant $a'$ will always be willing to make a recommendation. This however may not be the case as a recommendation may in practice lead to $a'$ being rejected by his assigned company. Ultimately this may lead to a reassignment for $a'$ to a worse company or indeed $a'$ may end up unmatched. While it is true that a scenario may arise where these social ties between applicants may lead to a blocking pair of a matching, it is arguably equally likely that social ties between an applicant and the company itself will exist. That is, an applicant need not know another applicant who was employed by the company in order  to block a matching; it is enough for him to know \emph{any} employee in the company (for example the Head of Human Resources). Such a model could also be natural in many applications both within and beyond the job market context. 

%NSI: new paragraph
Additionally, many matching markets are cleared by a centralised clearinghouse.  While more traditional markets require agents to explicitly list potential matches, many online markets ask agents to list desirable characteristics and then use software to infer the preference lists of the agents.  In these markets, communication between agents is facilitated by the centralised clearinghouse.  Some agent pairs in the market may have social ties outside the clearinghouse.  Often these social ties are due to past interactions within the marketplace and so the clearinghouse is aware of them.  These pairs can communicate outside the clearinghouse and might block proposed matchings.  Most pairs, however, only become aware of each other when the clearinghouse proposes them as a match.  Thus even if they prefer each other to their assigned matches, they will not be able to discover each other and deviate from the matching.

Based on these ideas, we present a variant of HR called the \emph{Hospitals / Residents problem under Social Stability (HRSS)}.  In this model, which we describe in the context of assigning graduating medical residents to hospital positions, we assume that a resident-hospital pair will only form a blocking pair in practice if there exists some social relationship between them. Two agents that have such a social relationship are called an \emph{acquainted pair}, and this is represented by an edge in a \emph{social network graph}. We call a pair of agents that do not have such a social relationship an \emph{unacquainted pair}.  Such a pair may be part of a matching $M$ (given that $M$ is typically constructed by a trusted third party, i.e., a centralised clearinghouse) but cannot form a blocking pair with respect to $M$.  As a consequence, although a resident-hospital pair may form a blocking pair in the classical sense, if they are an unacquainted pair, they will not form a blocking pair in the HRSS context. A matching that admits no blocking pair in this new context is said to be \emph{socially stable}. We denote the one-to-one restriction of HRSS as the \emph{Stable Marriage problem with Incomplete lists under Social Stability (SMISS)}.

Hoefer and Wagner \cite{Hoe13,HW13} studied a problem that generalises both HR+SN and HRSS.  In their model, the social network graph involves all agents and need not be bipartite.  A pair \emph{locally blocks} a given matching $M$ if (i) it blocks in the classical sense, and (ii) the agents involved are at most $l$ edges apart in the social network graph augmented by $M$.  This scenario can be viewed as a generalisation of the HR+SN ($l = 2$) and HRSS ($l=1$) models.  They studied the convergence time for better-response dynamics that converge to locally stable matchings, and also established a lower bound for the approximabiliy of the problem of finding a maximum locally stable matching (for the case that $l \leq 2$).

Locally stable matchings have also been investigated in the context of the Stable Roommates problem (a non-bipartite generalisation of the Stable Marriage problem) in \cite{CF09}. Here, the \emph{Stable Roommates problem with Free edges (SRF)} as introduced was motivated by the observation that, in kidney exchange matching schemes, donors and recipients do not always have full information about others and are more likely to have information only on others in the same transplant centre as them. The problem is defined by the traditional Stable Roommates problem together with a set of \emph{free edges}.  These correspond to pairs of agents in different transplant centres that do not share preference information; such pairs may be involved in stable matchings, but cannot block any matching. It is shown in \cite{CF09} that the problem of determining whether a stable matching exists, given an SRF instance, is NP-complete.

In this paper, we present some algorithmic results for the HRSS model described above. In Section \ref{pre}, we present some preliminary definitions and observations. We give a reduction from the HRSS to the HR+SN problem in Section \ref{reduction}. In Section \ref{hrf} we show that MAX HRSS, the problem of finding a maximum socially stable matching given an HRSS instance $I$, is NP-hard even under certain restrictions on the lengths of the preference lists. This result holds even if $I$ is an instance of SMISS. Then in Section \ref{approx_smif}, we consider the approximability of MAX HRSS.  We give a $3/2$-approximation algorithm for the problem, and also show that it is not approximable within $21/19-\varepsilon$, for any $\varepsilon >0$, unless P=NP, and not approximable within $3/2-\varepsilon$, for any $\varepsilon > 0$, assuming the Unique Games Conjecture.  in Section \ref{special} we present polynomial-time algorithms for three special cases of MAX HRSS, including the cases where (i) each resident's list is of length at most 2 and each hospital has capacity 1, (ii) the number of unacquainted pairs is constant, and (iii) the number of acquainted pairs is constant.  Finally some open problems are given in Section \ref{open}.

%----------------------------------------------------------------------------------------------------------------------------------------------------
\section{Preliminary definitions and results}
\label{pre}
An instance $I$ of the Hospitals/Residents problem (HR), as defined in \cite{GS62}, contains a set $R=\{r_1, r_2, ... , r_{n_1}\}$ of residents, a set $H=\{h_1, h_2, ... , h_{n_2}\}$ of hospitals. Each resident $r_i \in R$ ranks a subset of $H$ in strict order of preference; each hospital $h_j \in H$ ranks a subset of $R$, consisting of those residents who ranked $h_j$, in strict order of preference. Each hospital $h_j$ also has a capacity $c_j \in \mathbb{Z}^+$  indicating the maximum number of residents that can be assigned to it. A pair $(r_i, h_j)$ is called an \emph{acceptable pair} if $h_j$ appears in $r_i$'s preference list. We denote by $\mathcal{A}$ the set of all acceptable pairs. A \emph{matching} $M$ is a set of acceptable pairs such that each resident is assigned to at most one hospital and the number of residents assigned to each hospital does not exceed its capacity. If $r_i$ is matched in $M$, we denote the hospital assigned to resident $r_i$ in $M$ by $M(r_i)$. We denote the set of residents assigned to hospital $h_j$ in $M$ as $M(h_j)$. A resident $r_i$ is \emph{unmatched} in $M$ if no pair in $M$ contains $r_i$.  A hospital $h_j$ is \emph{undersubscribed} in $M$ if $|M(h_j)| < c_j$. A pair $(r_i, h_j)$ is said to \emph{block} a matching $M$, or form a \emph{blocking pair} with respect to $M$, in the classical sense, if (i) $r_i$ is unmatched in $M$ or prefers $h_j$ to $M(r_i)$ and (ii) $h_j$ is undersubscribed in $M$ or prefers $r_i$ to some resident in $M(h_j)$. A matching that admits no blocking pair is said to be \emph{stable}.

We define an instance $(I, G)$ of the \emph{Hospitals/Residents Problem under Social Stability (HRSS)} as consisting of an HR instance $I$ (as defined above) and a bipartite graph $G = (R \cup H, A)$, where $A\subseteq \mathcal A$. A pair $(r_i, h_j)$ belongs to $A$ if and only if $r_i$ has social ties with $h_j$. We call $(r_i, h_j)$ an \emph{acquainted pair}.  We also define the set of \emph{unacquainted} pairs (which cannot block any matching) to be $U = \mathcal{A} \backslash A$. A pair $(r_i, h_j)$ \emph{socially blocks} a matching $M$, or forms a \emph{social blocking pair} with respect to $M$, if  $(r_i, h_j)$ blocks $M$ in the classical sense in the underlying HR instance $I$ and $(r_i, h_j) \in A$.  A matching $M$ is said to be \emph{socially stable} if there exists no social blocking pair with respect to $M$. If we restrict the hospitals' capacities to 1, we obtain the \emph{Stable Marriage problem with Incomplete lists under Social Stability (SMISS)}, and refer to the agents as \emph{men} $\mathcal{U} =\{m_1,\ldots,m_{n_1}\}$ and \emph{women} $\mathcal{W}=\{w_1,\ldots,w_{n_2}\}$.

Clearly every instance of HRSS admits a socially stable matching.  This is because the underlying HR instance is bound to admit a stable matching \cite{GS62} which is also socially stable. However socially stable matchings could be larger than stable matchings. Consider the SMISS instance $(I, G)$ shown in Figure \ref{smiss_instance}. Matchings $M_1 = \{(m_1, w_1), (m_2, w_2)\}$ and $M_2 = \{(m_2, w_1)\}$ are both socially stable in $(I, G)$ and $M_2$ is the unique stable matching. Thus an instance of SMISS (and hence HRSS) can admit a socially stable matching that is twice the size of a stable matching. Clearly the instance shown in Figure \ref{smiss_instance} can be replicated to give an arbitrarily large SMISS instance with a socially stable matching that is twice the size of a stable matching. This, and applications where we seek to match as many agents as possible, motivates MAX HRSS.

\begin{figure}[t]
\centering
\begin{minipage}[b]{0.6\linewidth}
\begin{align*}
&\mbox{men's preferences}  ~~~~~~~ &&\mbox{women's preferences}\\
&m_1: w_1  ~~~~~~~ &&w_1: m_2~~~m_1 \\
&m_2: w_1~~~w_2~~~~~~~ &&w_2:m_2 
\end{align*}
\end{minipage}
\begin{minipage}[b]{0.3\linewidth}
\centering
social network graph $G$\\
\begin{tikzpicture}
      \tikzstyle{every node}=[draw,circle,fill=black,minimum size=3pt,
                            inner sep=0.3pt]
                                                  
\draw (0,0) node (m1) [label=left:$m_2$]{};
\draw (2,0) node (w1) [label=right:$w_2$]{};
\path (m1) edge (w1);

\draw (0,1) node (m2) [label=left:$m_1$]{};
\draw (2,1) node (w2) [label=right:$w_1$]{};
\path (m2) edge (w2);
\end{tikzpicture}
\end{minipage}
\caption{SMISS instance $(I, G)$}
\label{smiss_instance}
\end{figure}
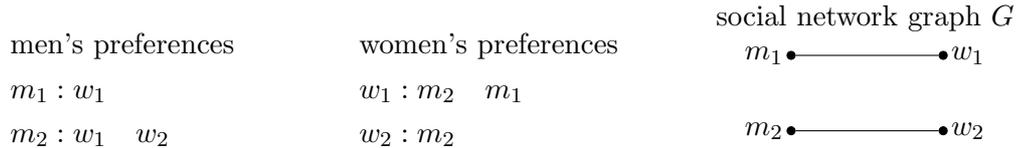
%----------------------------------------------------------------------------------------------------------------------------------------------------
\section{Reduction from HRSS to HR+SN}
\label{reduction}
As defined in \cite{ES09b,CM12}, an instance $(I, G')$ of the HR+SN problem involves a Hospitals / Residents instance $I$, defined in \cite{GS62}, containing a set $R=\{r_1, r_2, ... , r_{n_1}\}$ of residents, a set $H=\{h_1, h_2, ... , h_{n_2}\}$ of hospitals, and a graph $G$ describing the social network (SN) of the residents. In the graph $G' = (V, E)$, $V = R$ and an edge $\{r_i, r_k\}$ belongs to $E$ if and only if $r_i$ and $r_k$ have \emph{social ties}. A pair $(r_i, h_j)$ is a \emph{local blocking pair} with respect to a matching $M$, or \emph{locally blocks} $M$, if $(r_i, h_j)$ blocks $M$ in the classical sense and there is some resident $r_k$ such that $\{r_i, r_k\} \in E$ and $r_k \in M(h_j)$. A matching $M$ is said to be \emph{locally stable} if there exists no local blocking pair with respect to $M$. In the HR+SN (respectively HRSS) context we refer to a \emph{resident-complete} locally (respectively socially) stable matching as one in which all the residents are matched.

In this section we show the close relationship between the HRSS and HR+SN problems. Consider an instance $(I, G)$ of HRSS where $I$ is the underlying HR instance and $G$ is the social network graph. $I$ involves of a set of residents $R_0 = \{r_1, r_2, ..., r_{n_1}\}$ and a set of hospitals $H_0 = \{h_1, h_2, ..., h_{n_2}\}$. We construct an instance $(I',G')$ of HR+SN from $(I,G)$ as follows: let $I'$ consist of a set of residents $R = R_0 \cup R_1$ where $R_1 = \{r_{n_1+1}, r_{n_1+2}, ..., r_{n_1+n_2}\}$. Every resident $r_{n_1+j} \in R_1$  has a single entry $h_j$ in his preference list. Every resident $r_i \in R_0$ has an identical preference list in $I'$ as in $I$.  Let $I'$ also involve a set of hospitals $H$, where $H = H_0$ such that every hospital $h_j \in H$ has resident $r_{n_1+j}$ as the first entry in its preference list and has capacity $c_j' = c_j+1$. $h_j$'s preference list in $I$ is then appended to $r_{n_i+j}$ to yield $h_j$'s preference list in $I'$. To construct $G'$, let the vertices in $G'$ correspond to the residents in $R$ and add edge $\{r_i, r_{n_1+j}\}$ to $G'$ if and only if $(r_i, h_j) \in A$, where $A=E(G)$ is the set of acquainted pairs in $(I,G)$.

\begin{theorem1} 
\label{hrsn_reduction}
If $M$ is a socially stable matching in $(I, G)$, then $M' = M \cup \{(r_{n_1+j}, h_j) : r_{n_1+j} \in R_1\}$ is a locally stable matching in $(I', G')$. Conversely if $M'$ is a resident-complete locally stable matching in $(I', G')$ then $M = M' \backslash \{(r_{n_1+j}, h_j) : r_{n_1+j} \in R_1\}$ is a resident-complete socially stable matching in $(I, G)$.
\end{theorem1}

\begin{proof}
Suppose $M$ is socially stable in $(I, G)$. Then no (classical) blocking pair with respect to $M$ in $I$ is contained in $G$.  Let $M' = M \cup \{(r_{n_1+j}, h_j) : r_{n_1+j} \in R_1\}$. If some pair ($r, h$) locally blocks $M'$ in $(I', G')$ then (i) $(r, h)$ must be a blocking pair with respect to $M'$ in $I'$, and (ii) $\{r, r'\} \in E$ for some $r' \in M'(h)$. By construction, for every edge in $E$, one resident is in $R_0$ and the other in $R_1$. If $r \in R_1$ then $r$ cannot form any blocking pair with respect to $M'$ as he is matched to his only choice. If $r \in R_0$ then $r' = r_{n_1+j} \in R_1$ for some $j$ ($1 \leq j \leq n_2$), and $h = M'(r_{n_1+j}) = h_j$. Thus $(r, h) \in A$. By the construction of the preference lists in $(I', G')$, as $(r, h)$ is a (classical) blocking pair of $M'$ in $I'$, $(r, h)$ is also a (classical) blocking pair of $M$ in $I$. Hence $(r, h)$ socially blocks $M$ in $(I, G)$, a contradiction.  

Conversely suppose $M'$ is a resident-complete locally stable matching in $(I', G')$. Then there is no blocking pair $(r, h)$ of $M'$ in $I'$ such that $\{r, r'\} \in E$ for some $r' \in R$ where $r' \in M'(h)$. Let $M = M' \backslash \{(r_{n_1+j}, h_j) : r_{n_1+j} \in R_1\}$. Clearly $M$ is a resident-complete matching in $I$. If some pair $(r, h)$ socially blocks $M$ in $(I, G)$, (i) $(r, h)$ must block $M$ in $I$ (and thus $M'$ in $I'$) and (ii) $(r, h) \in A$. By construction, if $(r, h) \in A$, then $\{r, r'\} \in E$ where $r \in R_0$,  $r' = r_{n_1+j} \in R_1$ and $h  = h_j$. But as $M'$ is resident-complete, $(r_{n_1+j}, h_j) \in M'$ for each $j~(1 \leq j \leq n_2)$.  Thus $r' = r_{n_1+j} \in M'(h)$, a contradiction to the initial assumption that $M'$ is locally stable in $(I', G')$. \qed
\end{proof}

Although the converse statement in Theorem \ref{hrsn_reduction} places a severe restriction on $M'$ (it must be a resident-complete locally stable matching in the HR+SN instance $(I',G')$), it can be relaxed slightly to the case that $M'$ is any locally stable matching in which all the residents $r_{n_1+j} \in R_1$ are matched. It remains to be shown that a reduction exists from HRSS to HR+SN that does not place such a restriction on $M'$.

%----------------------------------------------------------------------------------------------------------------------------------------------------
\section{Hardness of HRSS}
\label{hrf}
We now show that MAX SMISS, the problem of finding a maximum socially stable matching given an SMISS instance is NP-hard. Indeed we prove NP-completeness for COM SMISS, the problem of deciding whether there exists a complete socially stable matching (i.e., a socially stable matching in which all men and women are matched) in an instance of SMISS. It is obvious that, given a matching $M$ of size $k$, it can be verified in polynomial time whether $M$ is socially stable, thus the problem is in NP. Next we define the \emph{Stable Marriage problem with Ties and Incomplete Lists (SMTI)} as a variant of the classical Stable Marriage problem \cite{GS62} where men and women are allowed to find each other unacceptable (thus causing incomplete preference lists) and preference lists may include ties, representing indifference. For example two women may be tied in a given man's preference list.  It was shown in \cite{MIIMM02} that COM SMTI, the problem of deciding whether a complete stable matching exists in an instance of SMTI, is NP-complete even if the ties occur in the men's lists only and each tie occurs at the tail of some list. An SMTI instance $I$ satisfying these restrictions can be reduced to an  SMISS instance $(I', G)$ in polynomial time such that a matching $M$ is a complete stable matching in $I$ if and only if $M$ is a complete socially stable matching in $(I', G)$. These observations lead to the following result. 

\begin{theorem1}
\label{com-smti}
COM SMISS is NP-complete.
\end{theorem1}

\begin{proof}
Consider an instance $I$ of SMTI where the ties occur only on the mens' preference lists and each man has one tie which occurs at the end of the list (a tie may be of length $1$ for this purpose). We define $t(m_i)$ as the set of women contained in the tie in man $m_i$'s preference list. We can construct an instance $(I', G)$ of SMISS  such that $I'$ is the Stable Marriage instance with incomplete lists formed by breaking the ties in $I$ in an arbitrary manner. Let $G=(\mathcal{U} \cup \mathcal{W}, A)$, where $\mathcal{U}$ and $\mathcal{W}$ are the sets of all men and women in $I$ respectively and $A = \bigcup_{m_i \in \mathcal{U}} \bigcup_{w_j \notin t(m_i)} (m_i, w_j)$.  We claim that a matching $M$ is a complete  stable matching in $I$ if and only if $M$ is a complete socially stable matching in $(I', G)$.

Suppose $M$ is a complete stable matching in $I$. Suppose also that $M$ is not socially stable in $(I', G)$. Then there exists some pair $(m_i, w_j) \in A$ that socially blocks $M$ in $(I', G)$. Since $(m_i, w_j) \in A$, $w_j \notin t(m_i)$. Thus $m_i$ prefers $w_j$ to $M(m_i)$ in $I$. Also $w_j$ prefers $m_i$ to $M(w_j)$ since there are no ties in $w_j$'s preference list. Thus $(m_i, w_j)$ blocks $M$ in $I$, a contradiction to our initial assumption.

Conversely, suppose $M$ is a complete socially stable matching in $(I', G)$. Suppose also that $M$ is not stable in $I$. Then there exists some pair $(m_i, w_j)$ that blocks $M$ in $I$. If $M(m_i) \in t(m_i)$ then $w_j \notin t(m_i)$ so $m_i$ prefers $w_j$ to $M(m_i)$ in $(I', G)$. If $M(m_i) \notin t(m_i)$ then $m_i$ prefers $w_j$ to $M(m_i)$ in $(I', G)$. $w_j$ has the same preference list in $I$ and $(I', G)$. So $(m_i, w_j) \in A$ and thus the pair socially blocks $M$ in $(I', G)$, a contradiction to our initial assumption. \qed
\end{proof} 

As discussed in \cite{IMO09}, some centralised matching schemes usually require the agents in one or more sets to have preference lists bounded in length by some small integer. For example, until recently, in the Scottish Foundation Allocation Scheme (the centralised clearinghouse for matching medical residents in Scotland) \cite{Irv98}, medical graduates were required to rank only 6 hospitals in their preference lists. We denote by $(p, q)$-MAX HRSS the problem of finding a maximum socially stable matching in an HRSS instance where each resident is allowed to rank at most $p$ hospitals and each hospital at most $q$ residents. We set $p=\infty$ and $q=\infty$ to represent instances where the residents and hospitals respectively are allowed to have unbounded-length preference lists. Analogously we may obtain the definition of $(p, q)$-MAX SMISS and $(p, q)$-COM SMISS from MAX SMISS and COM SMISS respectively. It turns out that $(p, q)$-COM SMISS is NP-complete even for small values of $p$ and $q$.

\begin{theorem1}
\label{3-3com-smti}
(3,3)-COM SMISS is NP-complete.
\end{theorem1}

\begin{proof}
We prove this by inspecting the hardness result described for the $(3,3)$-COM SMTI problem by Irving et al. \cite{IMO09}. They showed that $(3,3)$-COM SMTI, the problem of deciding whether a complete stable matching exists in an instance of SMTI where each preference list is of length at most $3$, is NP-complete using a reduction from a variant of the SAT problem. 

By inspecting the instance $I$ of SMTI, constructed in the proof, we observe that all the ties appear on the women's side of the instance and appear at the ends of the preference list. We have shown in Theorem \ref{com-smti} that an instance $I$ of SMTI in this form can be reduced in polynomial time to an instance $(I', G)$ of SMISS such that a matching $M$ is a complete stable matching in $I$ if and only if $M$ is a complete socially stable matching in $(I', G)$. We conclude that $(3,3)$-COM SMISS is also NP-complete.
\qed
\end{proof}

%----------------------------------------------------------------------------------------------------------------------------------------------------
\section{Approximating MAX HRSS}
\label{approx_smif}
As shown in Section \ref{hrf}, MAX HRSS is NP-hard. In order to deal with this hardness, polynomial-time approximation algorithms can be developed for MAX HRSS. In this section we present a $3/2$-approximation algorithm for MAX HRSS.  We show this is tight assuming the Unique Games Conjecture (UCG), and also show a $21/19 - \varepsilon$ lower bound assuming $P\not=NP$. The lower bounds hold even for MAX SMISS.

\subsection{Approximation}
\label{sec_approx}
For the upper bound for MAX HRSS, we observe that a technique known as \emph{cloning} has been described in literature \cite{GI89,RS90}, which may be used to convert an HR instance $I$ into an instance $I'$ of the Stable Marriage problem with Incomplete lists in polynomial time, such that there is a one-to-one correspondence between the set of stable matchings in $I$ and $I'$. A similar technique can be used to convert an HRSS instance to an SMISS instance in polynomial time. 

Let $(I, G)$ be an instance of HRSS where $R=\{r_1, r_2,...,r_{n_1}\}$ is the set of residents and $H=\{h_1, h_2,..., h_{n_2}\}$ is the set of hospitals. Let $c_j$ be the capacity of hospital $h_j \in H$.  We can construct an instance $(I', G')$ of SMISS as follows. Each resident in $(I, G)$ corresponds to a man in $(I', G')$. Each hospital $h_j \in H$ gives rise to $c_j$ women in $(I', G')$ denoted by $h_{j,1}, h_{j,2},...,h_{j,c_j}$ each of whom has the same preference list as $h_j$ in $(I', G')$ but with a capacity of $1$. Each man $r_i \in R$ starts off with the same preference list in $(I', G')$ as he has in $(I, G)$. We then replace each entry on his list by the $c_j$ women $h_{j,1}, h_{j,2},...,h_{j,c_j}$ listed in strict order (increasing on second subscript).  $G'$ has vertex set $R\cup H'$, where $H'=\{h_{j,k} : h_j\in H\wedge 1\leq k\leq c_j\}$, and edge set $A'=\{(r_i,h_{j,k}) : (r_i,h_j)\in A\wedge 1\leq k\leq c_j\}$, with $A=E(G)$ denoting the set of acquainted pairs in $(I,G)$.

\begin{theorem1}
Given an instance $(I, G)$ of HRSS, we may construct in $O(n_1 + c_{max}m)$ time an instance $(I', G')$ of SMISS such that a socially stable matching $M$ in $(I, G)$ can be transformed in $O(c_{max}m)$ time to a socially stable matching $M'$ in $(I', G')$ with $|M'| = |M|$ and conversely, where $n_1$ is the number of residents, $c_{max}$ is the maximum hospital capacity and $m$ is the number of acceptable resident-hospital pairs in $I$. 
\label{cloning}
\end{theorem1}

\begin{proof}
Let $(I, G)$ be an instance of HRSS and $(I', G')$ be an instance of SMISS cloned from $(I, G)$. Let $M$ be a socially stable matching in $(I, G)$. We form a matching $M'$ in $(I', G')$ as follows. For each $h_j \in H$, let $r_{j,1}, r_{j,2},...r_{j,x_j}$ be the set of residents assigned to $h_j$ in $M$ where $x_j \leq c_j$, and $k < l$ implies that $h_j$ prefers $r_{j,k}$ to $r_{j,l}$.  Add $(r_{j,k}, h_j^k)$ to $M'$ $(1 \leq k \leq x_j)$. Clearly $M'$ is a matching in $(I', G')$ such that $|M'| = |M|$, and it is straightforward to verify that $M'$ is socially stable in $(I', G')$.

Conversely let $M'$ be a socially stable matching in $(I', G')$. We form a matching $M$ in $(I, G)$ as follows. For each $(r_i, h_j^k) \in M'$, add $(r_i, h_j)$ to $M$. Clearly $M$ is a socially stable matching in $(I, G)$ such that $|M| = |M'|$.

The complexities stated arise from the fact that $I'$ has $O(n_1 + C)$ agents and $O(c_{max}m)$ acceptable man-woman pairs, where $C$ is the total capacity of the hospitals in $I$.
\qed
\end{proof}

Due to Theorem \ref{cloning}, an approximation algorithm $\alpha$ for MAX SMISS with performance guarantee $c$ (for some constant $c > 0$) can be used to obtain an approximation for MAX HRSS with the same performance guarantee. This can be done by cloning the HRSS instance $(I, G)$ to form an SMISS instance $(I', G')$ using the technique outlined above, applying $\alpha$ to $(I', G')$ to obtain a matching $M'$. This matching can then be transformed to a matching $M$ in $(I, G)$ such that $|M| = |M'|$ (again as in the proof of Theorem \ref{cloning}). Our first upper bound for MAX HRSS is an immediate consequence of the fact that any stable matching is at least half the size of a maximum socially stable matching.

\begin{proposition1}
MAX HRSS is approximable within a factor of 2.
\label{approx_max_2}
\end{proposition1}

\begin{proof}
Let $M$ be a maximum socially stable matching given an instance $(I, G)$ of SMISS and let $M'$ be a stable matching in $I$. Thus $M'$ is a maximal matching in the underlying bipartite graph $G'$ in $I$. Hence $|M'| \geq \beta^+(G')/2$ where $\beta^+(G')$ is the size of a maximum matching in $G'$ \cite{KH78}. Also $\beta^+(G') \geq |M|$ and so $|M'| \geq |M|/2$. \qed
\end{proof}

We now present a $3/2$-approximation algorithm for MAX SMISS. The algorithm relies on the principles outlined in the $3/2$-approximation algorithms for the general case of MAX HRT, the problem of finding a maximum cardinality stable matching given an instance of the Hospitals / Residents problem with Ties, as presented by Kir\'{a}ly \cite{ZK12} and McDermid \cite{McD09}. Given an instance $(I, G)$ of SMISS, the algorithm works by running a modified version of the extended Gale-Shapley algorithm \cite{GS62} where unmatched men are  given a chance to propose again by promoting them on all the preference lists on which they appear.  Let $A$ and $U$ denote the sets of acquainted and unacquainted pairs in $(I,G)$ respectively.

Consider a woman $w_j$ in  $(I, G)$. We denote an \emph{unacquainted man $m_i$ on $w_j$'s preference list} as one where $(m_i, w_j) \in U$. Similarly we denote an \emph{acquainted man $m_i$ on $w_j$'s preference list} as one where $(m_i, w_j) \in A$. For a man $m_i$, we denote $next(m_i)$ as the next woman on $m_i$'s list succeeding the last woman to whom he proposed to or the first woman on $m_i$'s list if he has been newly promoted or is proposing for the first time. During the execution of the algorithm if a man runs out of women to propose to on his list for the first time, he is \emph{promoted}, thus allowing him to propose to the remaining women on his list beginning from the first.  A man can only be promoted once during the execution of the algorithm. If a promoted man still remains unmatched after proposing to all the women on his preference list, he is removed from the instance and will not be part of the final matching.

In the classical Gale-Shapley algorithm \cite{GS62} a woman $w_j$ prefers a man $m_i$ to another $m_k$ if $rank(w_j, m_i) < rank(w_j, m_k)$. We define a modified version of the extended Gale-Shapley algorithm \cite{GI89}, \emph{mod-EXGS}, where a woman does not accept or reject proposals from men solely on the basis of their positions on her preference list, but also on the basis of their status as to whether they are acquainted or unacquainted men on her list and whether they have been promoted.  Given two men $m_i$ and $m_k$ on a  woman $w_j$'s preference list, we define the relations $\triangleleft_{w_j}$, $\triangleleft'_{w_j}$ and $\prec_{w_j}$ in Definition  \ref{def1}
\begin{definition1}
\label{def1}
Let $m_i$ and $m_k$ be any two men on a woman $w_j$'s list. Then
%\begin{enumerate}
\item 1. $m_i \triangleleft_{w_j} m_k$ if either 
\item ~~~(i) $(m_i, w_j) \in U$, $(m_k, w_j) \in U$, $m_i$ is promoted and $m_k$ is unpromoted or
\item ~~~(ii) $(m_i, w_j) \in A$, $(m_k, w_j) \in U$ and $m_k$ is unpromoted. 
\item 2. $m_i \triangleleft'_{w_j} m_k$ if $m_i~{\mathrlap{/}\triangleleft}_{w_j}~m_k~ , ~m_k~{\mathrlap{/}\triangleleft}_{w_j}~m_i ~and~w_j$ prefers $m_i$ to $m_k$ in the classical sense.
%\end{enumerate}
\item We define $\prec_{w_j} = \triangleleft_{w_j} ~\cup ~\triangleleft'_{w_j}$.
\end{definition1}
The relation $\prec_{w_j}$ will be used to determine whether a proposal from a man is accepted or rejected by $w_j$.

The main algorithm \emph{approx-SMISS} (as shown in Algorithm \ref{approx_max_smif}) starts by calling \emph{mod-EXGS} (as shown in Algorithm \ref{approx_max_smif2}) where a \emph{proposal sequence} is started by allowing each man to  propose to women beginning from the first woman on his preference list. If a man $m_i$ proposes to a woman $w_j$ on his list and $w_j$ is matched and $m_i \prec_{w_j} M(w_j)$, then $w_j$ is unmatched from her partner $m_k$ and $m_k$ will be allowed to continue proposing to other women on his list. $w_j$ is then assigned to $m_i$. On the other hand, if   $M(w_j) \prec_{w_j} m_i$ then $w_j$ rejects $m_i$'s proposal. Also if $w_j$ is unmatched when $m_i$ proposes, she is assigned to $m_i$. Irrespective of whether the proposal from $m_i$ is accepted or rejected, if $(m_i, w_j) \in A$ then all pairs $(m_k, w_j)$ such that $rank(w_j, m_k) > rank(w_j, m_i)$ are deleted from the instance. However if $(m_i, w_j) \in U$ no such deletions take place. This proposal sequence continues until every man is either matched or has exhausted his preference list. 

After each proposal sequence (where control is returned to the \emph{approx-SMISS} algorithm), if a promoted man still remains unmatched after proposing to all the women on his preference list, he is removed from the instance. Also if a previously unpromoted man exhausts his preference lists and is still unmatched, he is promoted and a new proposal sequence initiated (by calling \emph{mod-EXGS}).  The algorithm terminates when each man either (i) is assigned a partner,  (ii) has no woman on his preference list or (iii) has been promoted and has proposed to all the women on his preference list for a second time. 

\begin{algorithm}[t]
\small
\caption{approx-SMISS}
\label{approx_max_smif}
\begin{algorithmic}[1]
\STATE initial matching $M = \emptyset;$
\WHILE {some unmatched man  with a non-empty preference list exists}
	\STATE call mod-EXGS;
	\FOR {all $m_i$ such that $m_i$ is unmatched and promoted}
			\STATE remove $m_i$ from instance;
	\ENDFOR
	\FOR {all $m_i$ such that $m_i$ is unmatched, unpromoted and has a non-empty preference list}
			\STATE promote $m_i$;
	\ENDFOR	
\ENDWHILE
\STATE return the resulting matching $M$;
\end{algorithmic}
\normalsize
\end{algorithm}

\begin{algorithm}[t]
\small
\caption{mod-EXGS}
\label{approx_max_smif2}
\begin{algorithmic}[1]
	\WHILE {some man $m_i$ is unmatched and still has a woman left on his list}
		\STATE $w_j = next(m_i)$;
				\IF {$w_j$ is matched in $M$ and $m_i \prec_{w_j} M(w_j)$}
					\STATE $M = M ~\backslash ~ \{(M(w_j), w_j)\}$;
				\ENDIF	

				\IF{$w_j$ is unmatched in $M$}
					\STATE $M = M \cup \{(m_i$, $w_j$)\};
				\ENDIF

				\IF {$(m_i, w_j) \in A$}
					\FOR {each $m_k$ such that $(m_k, w_j) \in \mathcal{A}$ and $rank(w_j, m_k) ~>~ rank(w_j, m_i)$}
						\STATE delete $(m_k, w_j)$ from instance;
					\ENDFOR
				\ENDIF
	\ENDWHILE
\end{algorithmic}
\normalsize
\end{algorithm}

\begin{lemma1}
If Algorithm \ref{approx_max_smif} is executed on an SMISS instance $(I, G)$, it terminates with a socially stable matching $M$ in $(I, G)$.
\label{l6}
\end{lemma1}

\begin{proof}
Suppose $M$ is not a socially stable matching and some pair $(m_i, w_j)$ socially blocks $M$ in $(I, G)$. Hence $(m_i, w_j) \in A$. If $w_j$ is unmatched in $M$ then she never received a proposal from $m_i$ (as if she did, she will never become unmatched afterwards). This implies that $m_i$ must prefer his partner in $M$ to $w_j$ as he never proposed to $w_j$. Thus $(m_i, w_j)$ cannot socially block $M$ in this case. 

On the other hand, suppose $w_j$ is matched in $M$ but prefers $m_i$ to $M(w_j)=m_k$.  Also suppose $m_i$ is either unmatched in $M$ or prefers $w_j$ to $M(m_i)$. Then $m_i$ proposed to $w_j$ during the algorithm's execution or $(m_i, w_j)$ was deleted. In either case, all successors of $m_i$ on $w_j$'s list will be deleted, so $(m_k, w_j) \notin M$, a contradiction \qed
\end{proof}

\begin{lemma1}
\label{l7}
During any execution of the algorithm \emph{mod-EXGS}, if $m_i$ proposes to $w_j$ and $(m_i, w_j) \in A$ then $w_j$ will never reject $m_i$ if $rank(w_j, m_i) < rank(w_j, M(w_j))$.
\end{lemma1}

\begin{proof}
This follows from our definition of the $\prec_{w_j}$ relation. Suppose that $w_j$ rejects $m_i$ for a some man $m_k$ and $rank(w_j, m_i) < rank(w_j, m_k)$. Thus $m_k~\prec_{w_j}~m_i$. This implies that  $m_k~\triangleleft_{w}~m_i$ or $m_k~\triangleleft'_{w}~m_i$. Since $(m_i, w_j) \in A$ then $m_k~{\mathrlap{/}\triangleleft}_{w}~m_i$ so $m_k~\triangleleft'_{w}~m_i$ which in turn implies that $rank(w_j, m_k) < rank(w_j, m_i)$, a contradiction to our assumption.
 \qed
\end{proof}

The execution of  the \emph{mod-EXGS} algorithm takes $O(m)$ time where $m=|\mathcal{A}|$ is the number of acceptable pairs. These executions can be performed at most $2n_1$ times, where $n_1$ is the number of men, as a man is given at most two chances to propose to the women on his list. Thus the overall time complexity of the algorithm is $O(n_1m)$. The above results, together with Theorem \ref{cloning}, lead us to state the following theorem concerning the performance guarantee of the approximation algorithm for MAX HRSS.

\begin{theorem1}
MAX HRSS is approximable within a factor of 3/2.
\label{main_approx_theorem}
\end{theorem1}

\begin{proof}
We prove this result by  adopting techniques similar to those used by Iwama et al. \cite{IMY07} and subsequently by Kir\'{a}ly \cite{ZK12}.  We consider alternating paths of odd-length in connected components of the union $M \cup M_{opt}$. It is easy to see that, for alternating paths of length greater than 3, the number of edges in $M_{opt}$ is at most 3/2  times the number of edges in $M$. We now show that alternating paths of length 3 cannot exist in $M \cup M_{opt}$. 

Consider an alternating path of length 3 $\langle(m,w'), (m,w), (m',w)\rangle $ such that $(m,w') \in M_{opt}$, $(m,w) \in M$ and $(m',w) \in M_{opt}$.  Since $w'$ is unmatched in $M$, she was never proposed to during the entire execution of the algorithm. So $w'$ did not delete any men from her preference list and $m$ is unpromoted (if he had been promoted, he would have proposed to $w'$). Thus $m$ prefers $w$ to $w'$. Also since $m'$ is unmatched in $M$, either (i) the pair $(m', w)$ was deleted from the instance at some point during the execution of the algorithm or (ii) $w$ rejected $m'$ twice during the execution of the algorithm.

Consider case (i): if $(m', w)$ was deleted during the execution of the algorithm then $w$ received a proposal from some man $m''$ such that $(m'', w) \in A$ and $w$ prefers $m''$ to $m'$. Thus all successors of $m''$ on $w$'s list would also have been deleted and so $w$ prefers $m$ to $m''$ and hence to $m'$. Although $w$ would have accepted the proposal from $m''$ temporarily (see Lemma \ref{l7}), she ended up with an unpromoted man $m$. It might be that $m=m''$. On the other hand, suppose she accepted a series of proposals from men after $m''$ proposed to $w$ $\langle m''_0, m''_1, ..., m''_k\rangle$ for some $k \geq 0$ before finally being assigned to $m$.  Thus $m''_0 \prec_{w} m''$, which implies that $m''_0~\triangleleft_{w}~m''$ or $m''_0~\triangleleft'_{w}~m''$. Since $(m'', w) \in A$, $m''_0~{\mathrlap{/}\triangleleft}_{w}~m''$ which means $m''_0 \triangleleft'_{w} m''$. But for that to be true, $m''~{\mathrlap{/}\triangleleft}_{w}~m''_0$ as well. For that to happen, $m''_0$ must not be an unacquainted unpromoted man on $w$'s list. This means $m''_0$ can be an acquainted man or an unacquainted promoted man on $w$'s list. If $m=m''_0$ (i.e. $k=0$) then we know that $m$ is unpromoted and so $(m, w) \in A$. On the other hand if another man $m_1''$ exists in the sequence, then the same argument follows that $m''_1 \prec_{w} m''_0$ means that $m''_1 \triangleleft_{w} m''_0$ or $m''_1 \triangleleft'_{w} m''_0$. As already observed, $m''_0$ is either an acquainted man or an unacquainted, promoted man on $w$'s list. In both cases, $m''_1~{\mathrlap{/}\triangleleft}_{w}~m''_0$ meaning $m''_1 \triangleleft'_{w} m''_0$. This implies that $m''_0~{\mathrlap{/}\triangleleft}_{w}~m''_1$. But for that condition to be satisfied, $m''_1$ must not be an unacquainted  unpromoted man on $w$'s list. Once again this means $m''_1$ can be an acquainted man or an unacquainted promoted man on $w$'s list. The same sequence can continue for all men in the sequence until $m$ proposes. Since we already established that $m$ is unpromoted, it follows that $(m, w) \in A$. 

Now consider case (ii): $w$ rejected $m'$ even when he was promoted because of a proposal from some man $m''$. Thus $m'' \prec_{w} m'$ means $m''~\triangleleft_{w}~m'$ or $m''~\triangleleft'_{w}~m'$. Since we know that $m'$ was promoted, then $m''~{\mathrlap{/}\triangleleft}_{w}~m'$ thus $m''~\triangleleft'_{w}~m'$. This means that  $m'~{\mathrlap{/}\triangleleft}_{w}~m''$ and $w$ prefers $m''$ to $m'$. Since $m'~{\mathrlap{/}\triangleleft}_{w}~m''$ , $m''$ must not be an unacquainted unpromoted man on $w$'s list as $m'$ is promoted.  Thus $m''$ must be promoted or an acquainted man on $w$'s list. If $m''$ were an acquainted man on $w$'s list, the pair ($m', w$) would be deleted and the logic presented in case (i) above would follow through. Now suppose $m''$ is promoted and unacquainted on $w$'s list. We know that $m \neq m''$ as $m$ was unpromoted in $M$. Then $w$ may have accepted a series of proposals from men $\langle m''_0, m''_1, ..., m''_k\rangle$ for some $k \geq 0$ before finally being assigned to $m$. Thus $m''_0 \prec_{w} m''$ means $m''_0~\triangleleft_{w}~m''$ or $m''_0~\triangleleft'_{w}~m''$. Since $m''$ is promoted and unacquainted on $w$'s list $m''_0~{\mathrlap{/}\triangleleft}_{w}~m''$ meaning $m''_0 \triangleleft'_{w} m''$. This implies that $m''~{\mathrlap{/}\triangleleft}_{w}~m''_0$. But for that condition to be satisfied, $m''_0$ must not be an unacquainted unpromoted man on $w$'s list. A similar argument to the one presented for case (i) above results in the conclusion that $(m, w) \in A$.

The following conditions  (i) $m$ prefers $w$ to $w'$ (ii) $w$ prefers $m$ to $m'$ and (iii) $(m,w) \in A$ imply that $(m, w)$ will socially block $M_{opt}$, a contradiction.
\qed
\end{proof}

The SIMSS instance shown in Figure \ref{smif_tight} shows that the $3/2$ bound for the algorithm is tight. Here $M_{opt} = \{ (m_1, w_3), (m_2, w_1), (m_3, w_2)\}$ is the unique maximum socially stable matching. Also the approximation algorithm outputs $M = \{(m_1, w_1), (m_2, w_2) \}$  irrespective of the order in which proposals are made. Clearly this instance can be replicated to obtain an arbitrarily large SMISS instance for which the performance guarantee is tight.

\begin{figure}[t]
\centering
\begin{minipage}[c]{0.6\linewidth}
\begin{align*}
~~&\mbox{men's preferences} ~~~~~~~~~~ &&\mbox{women's preferences} \\
~~&m_1: w_1~~w_3  ~~~~~ &&w_1: m_2~~m_1 \\
~~&m_2: w_1~~w_2 &&w_2: m_2~~m_3 \\
~~&m_3: w_2 &&w_3: m_1
\end{align*}
\end{minipage}
\begin{minipage}[c]{0.3\linewidth}
\centering
social network graph $G$\\
\begin{tikzpicture}
\draw [thick] (0,2) --(2, 2);
\draw [thick] (0,2) --(2,0);
\draw [thick] (0,1) --(2,1);
\draw [fill] (0,0) circle [radius=0.075];
\draw [fill] (0,1) circle [radius=0.075];
\draw [fill] (2,0) circle [radius=0.075];
\draw [fill] (2,1) circle [radius=0.075];
\draw [fill] (0,2) circle [radius=0.075];
\draw [fill] (2,2) circle [radius=0.075];
\node [left] at (0,2) {$m_1$};
\node [left] at (0,1) {$m_2$};
\node [left] at (0,0) {$m_3$};
\node [right] at (2,2) {$w_1$};
\node [right] at (2,1) {$w_2$};
\node [right] at (2,0) {$w_3$};
\end{tikzpicture}
\end{minipage}
\caption{$|M_{opt}| = (3/2).|M|$}
\label{smif_tight}
\end{figure}
We remark that a similar 3/2-approximation algorithm for MAX HRSS was presented independently by Askalidis et al.\ in \cite{AIP13}.

%------------------------------------------------------------------------------------------------------------------------------------------
\subsection{Inapproximability}
\label{sec_inapprox}
We now show lower bounds on the approximability of our problem.  We start by giving the inapproximability result assuming P$\not=$NP.

\begin{theorem1}
MAX SMISS is not approximable within $21/19 - \varepsilon$ for any $\varepsilon > 0$, unless P=NP. 
\label{approx_max_21_19}
\end{theorem1}

\begin{proof}
We rely on a proof of the NP-hardness of approximating MAX SMTI given in \cite{HIMY07}. It is shown that there is no approximation algorithm for MAX SMTI with performance guarantee of  $21/19 - \varepsilon$ for any $\varepsilon > 0$, unless P=NP.  This is shown to be true even for instances where ties appear on one side only, each preference list is either strictly ordered or has a single tie of length 2 and ties appear at the end of the preference lists. Thus the same reduction shown in the proof of Theorem \ref{com-smti} can be used to construct an SMISS instance $(I, G)$ such that a polynomial-time algorithm that approximates MAX SMISS to within a factor of $21/19 - \varepsilon$ would do the same for MAX SMTI. \qed
\end{proof}

%-------------------------------------------------------------------------------------------------------------------------------------------------------------------------------------------------------------------

\par We can get a better lower bound of $3/2 -\varepsilon$, for any $\varepsilon>0$, if we strengthen our assumption from $P\neq NP$ to the truth of the UCG. We do that by showing a reduction from the NP-hard problem of Independent Set (IND SET) and then using an inapproximability result for it from \cite{austrin2009inapproximability} . Along with the hardness of approximation, the reduction gives us a second proof of the NP-hardness of MAX HRSS

\par An instance of IND SET consists of a pair $(G,k)$ of a graph and an integer $k\geq 0$ and asks if the graph $G$ contains a set $S\subseteq V(G)$ such that $|S|=k$ and for every $v_1, v_2\in S, (v_1,v_2)\notin E(G)$, i.e. no two nodes of $S$ are connected with an edge. It is well-known that this problem is NP-hard.

\par Our reduction follows very closely the reduction by Hoefer and Wagner~\cite{HW13} from IND SET to the problem of calculating the maximum locally stable matching. Although the basic idea of the construction is the same, we need to adjust the details as well as the proofs to our problem.

\par Given a graph $G=(V, E)$ we create an instance $(I,G'=(\mathcal{U}\cup \mathcal{W}, A))$ of the $SMISS$ matching problem as follows: First we enumerate (arbitrarily but consistently) all the nodes of $G: v^1, v^2,\ldots, v^n$, where $n=|V(G)|$. For every $v^i\in V(G)$ we create four vertices of $G'$: $m_{v_1^i}, m_{v_2^i}\in \mathcal{U}$ and $w_{v_1^i}, w_{v_2^i}\in \mathcal{W}$. For every $v^i\in V(G)$ we denote by $N_G(v^i)=\{v^j \mid (v^i,v^j)\in E(G)\}$ the neighborhood of $v^i$ in $G$, i.e. the nodes that $v^i$ is connected with in $G$.

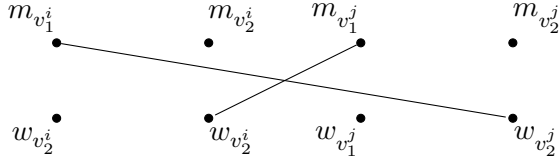
\begin{figure}[t]\label{example}
\setlength{\unitlength}{1.2cm}
    \begin{center}
\begin{tikzpicture}[>=stealth',shorten >=1pt,node distance=2cm,on grid,initial/.style    ={}]
\tikzstyle{every node}=[draw,circle,fill=black,minimum size=3pt,
                            inner sep=0.3pt]

%\draw (-3,2) node (v1) [label=above left:$v^1$]{};
%\draw (-4,2) node (v2) [label= below right:$v^2$]{};

\draw (0,2) node (m11) [label=above left:$m_{v^i_1}$]{};
\draw (2,2) node (w11) [label=above right:$m_{v^i_2}$]{};
%\path (m11) edge (w11);

\draw (0,1) node (m12) [label=below left:$w_{v^i_2}$]{};
\draw (2,1) node (w12) [label=below right:$w_{v^i_2}$]{};
%\path (m12) edge (w12);

\draw (4,2) node (m21) [label=above left:$m_{v^j_1}$]{};
\draw (6,2) node (w21) [label=above right:$m_{v^j_2}$]{};
%\path (m21) edge (w21);

\draw (4,1) node (m22) [label=below left:$w_{v^j_1}$]{};
\draw (6,1) node (w22) [label=below right:$w_{v^j_2}$]{};
%\path (m22) edge (w22);

    \tikzset{every node/.style={fill=white}} 
%    \path (m11)  edge [-,]   (w11);
%    \path (m12)  edge [-,]   (w12);

%\path (m21)  edge [-,]   (w21);
   % \path (m22)  edge [-,]   (w22);
  \path (m11)  edge [-,]   (w22);
  \path (m21)  edge [-,]   (w12);
%\node[anchor=north west,text width=3cm] (note2) at (-6,0.5)
  % {$\displaystyle \succ_{m_1}: w_2, w_1, m_1$\\
%$\displaystyle \succ_{m_1}: w_2, w_1, m_1$\\
%$\displaystyle \succ_{w_1}: m_1, w_1, m_2$\\
%$\displaystyle \succ_{w_2}: m_1, m_2, w_2$
%}; 
\end{tikzpicture}
\end{center}
\caption{Reduction from IND SET: The gadget of some edge ($v^i,v^j)$}
\end{figure}

For every $v^i\in V(G)$ and every $v^j\in N_G(v^i)$ we add the edge $\{m_{v_1^i}, w_{v_2^j}\}$ to $A(G')$. See Figure 2 for an example of the transformation for an edge $(v^i,v^j)$. The preferences of the agents are as follows (listed in order of preference with most prefered match first):

\begin{quote}
Man $m_{v_1^i}: w_{v_2^i}, \{w_{v_2^j}$ in increasing order on $j \mid v^j\in N_G(v^i)\}, w_{v_1^i}$\\
Man $m_{v_2^i}: w_{v_2^i}$\\
Woman $w_{v_1^i}: m_{v_1^i}$\\
Woman $w_{v_2^i}: m_{v_1^i}, \{m_{v_1^j}$ in increasing order on $j \mid v^j\in N_G(v^i)\}, m_{v_2^i}$
\end{quote}

\noindent
Notice that the enumeration of the vertices of $V(G)$ that we did in the beginning of our reduction shows up here, in the preferences of $m_{v_1^i}$ and $w_{v_2^i}$.   

%NSI: new paragraph
The idea behind our construction is as follows.  Each man is either of type 1 ($m_{v^i_1}$) or of type 2 ($m_{v^i_2}$), and similarly each woman is of type 1 or type 2.  Type 1 women and type 2 men are both quite picky: $w_{v^i_1}$'s only acceptable partner is $m_{v^i_1}$ and similarly $m_{v^i_2}$'s only acceptable partner is $w_{v^i_2}$, large matchings ought to include $(m_{v^i_1},w_{v^i_1})$ and $(m_{v^i_2},w_{v^i_2})$.  However, the other partners in these pairs, $m_{v^i_1}$ and $w_{v^i_2}$, like this situation least among all acceptable options.  In fact, type 1 men like their own type 2 woman best, and {\it any} type 2 woman from a neighboring node better than their own type 1 woman (and analogously for type 2 women).  Furthermore, through our construction, type 1 men and type 2 women from neighboring nodes $(v^i,v^j)\in E(G)$ are connected in the social graph.  Thus the ``large matching'' for agents corresponding to a node $v^i$ and a node $v^j$ (i.e., $\{(m_{v^i_1},w_{v^i_1}),(m_{v^i_2},w_{v^i_2}),(m_{v^j_1},w_{v^j_1}),(m_{v^j_2},w_{v^j_2})\}$) can not be simultaneously supported by a socially stable matching.  

We prove the following two lemmas for the above construction.
\begin{lemma1}\label{istosocstable}
If $G$ has an independent set $S$ of size $r$ then $I$ has a socially stable matching $M$ of size $n+r$.
\end{lemma1}
\begin{proof}
We take $M=\{(m_{v_1^i}, w_{v_2^i})\mid v^i\in V(G)\setminus S\}\cup \{(m_{v_1^i}, w_{v_1^i}), (m_{v_2^i}, w_{v_2^i})\mid v^i\in S\}$. We obviously have that $|M|=n-|S|+2\cdot |S|=n+r$.
The pairs $(m_{v_1^i}, w_{v_2^i})$ are always socially stable. For the rest of the pairs, the independent set property guarantees that for every $v^j\in N_G(S)$, the matching of $(m_{v_1^j}, w_{v_2^j})$ keeps $(m_{v_1^i}, w_{v_1^i})$ and $(m_{v_2^i}, w_{v_2^i})$ socially stable.  \qed
\end{proof}

The reverse direction is a little more technical.
\begin{lemma1}\label{socstabletois}
If $I$ has a socially stable matching of size $n+r$ then $G$ has an independent set of size $r$.
\end{lemma1}
\begin{proof}
First of all we can assume that $M$ is such that all $m_{v_1^i}$ are matched. If there is an $m_{v_1^i}$ that is not matched we can just take $w_{v_2^i}$ and match her to him. If some other $m_{v_1^j}$ gets single by this action we just do it repeatedly until no such $m_{v_1^i}$ exists. This will not break social stability of the matching nor decrease the size. 
\begin{claim}
If there exists $i\neq j$ such that $(m_{v_1^i}, w_{v_2^j})\in M$, then $(m_{v_1^j}, w_{v_2^i})\in M$ as well.
\end{claim}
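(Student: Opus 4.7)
The plan is to prove the claim by contradiction, exploiting the global structure of $M$. I will consider the auxiliary digraph $D$ on vertex set $V(G)$ whose arcs are $v^s \to v^t$ exactly when $(m_{v_1^s}, w_{v_2^t}) \in M$ and $s \neq t$. Any such arc forces $v^t \in N_G(v^s)$ (otherwise $w_{v_2^t}$ would not appear on $m_{v_1^s}$'s list), and by symmetry of the neighbourhood relation and the construction of $A(G')$ the ``reverse'' pair $(m_{v_1^t}, w_{v_2^s})$ is then acquainted. Since $M$ is a matching, every vertex of $D$ has in-degree and out-degree at most $1$, so $D$ decomposes into vertex-disjoint simple directed paths and cycles. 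The hypothesis supplies the arc $v^i \to v^j$, and the claim is equivalent to the assertion that every component of $D$ is a $2$-cycle, i.e., that $D$ contains no path of positive length and no cycle of length at least $3$. I will use social stability of $M$ to rule out both types of forbidden component.

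First I rule out paths. Let $v^{i_1} \to v^{i_2} \to \cdots \to v^{i_\ell}$ be a maximal path in $D$ with $\ell \ge 2$. Because $m_{v_1^{i_\ell}}$ is matched (by the standing simplifying assumption in the proof of Lemma~\ref{socstabletois}), $w_{v_2^{i_\ell}}$ is taken by $m_{v_1^{i_{\ell-1}}}$, and $v^{i_\ell}$ has out-degree $0$, the man $m_{v_1^{i_\ell}}$ must be paired with his last-choice $w_{v_1^{i_\ell}}$. Analogously $w_{v_2^{i_1}}$ is either unmatched or paired with her last-choice $m_{v_2^{i_1}}$. When $\ell = 2$ the acquainted pair $(m_{v_1^{i_2}}, w_{v_2^{i_1}})$ socially blocks $M$ immediately. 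For $\ell \ge 3$ I examine the acquainted reverse pair $(m_{v_1^{i_{s+1}}}, w_{v_2^{i_s}})$ for each $s \in \{1, \ldots, \ell-1\}$ and demand it not to block: the boundary $s=1$ forces $i_3 < i_1$, the boundary $s = \ell-1$ forces $i_{\ell-2} < i_\ell$, and interior $s$ yields $i_{s+2} < i_s$ or $i_{s-1} < i_{s+1}$. Propagating the $s=1$ boundary through the interior forces $i_{s+2} < i_s$ at every step, culminating in $i_\ell < i_{\ell-2}$, contradicting the $s = \ell-1$ boundary.

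The subtler part is ruling out cycles of length $\ell \ge 3$, where indices are taken mod $\ell$ and the non-blocking condition at each $s$ again reads $i_{s+2} < i_s$ or $i_{s-1} < i_{s+1}$. The key observation is that the man-side clause at $s$ is the negation of the woman-side clause at $s+1$: setting $Y_s \equiv (i_{s+1} > i_{s-1})$, the condition becomes the implication $Y_{s+1} \Rightarrow Y_s$. Chaining around the cycle forces all $Y_s$ to share a common truth value, and either branch produces a cyclic chain of strict inequalities on the $i_s$ that is impossible. Hence every arc of $D$ lies on a $2$-cycle, and in particular $(v^j, v^i) \in D$, i.e., $(m_{v_1^j}, w_{v_2^i}) \in M$, as claimed.

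The main obstacle I anticipate is organising the non-blocking conditions cleanly in both the path and cycle cases; the linchpin that makes the cycle argument work is the negation-equivalence of the man-side and woman-side clauses at adjacent indices, which collapses the cyclic system into a single implication chain.
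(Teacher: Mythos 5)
Your proof is correct, and it takes a genuinely different route from the paper's. The paper argues by an extremal/minimal-counterexample device: it picks the minimum index $i$ for which some cross pair $(m_{v_1^i},w_{v_2^j})\in M$ lacks its mirror, shows $i<j$, and then rules out each way $m_{v_1^j}$ and $w_{v_2^i}$ could be matched by exhibiting a social blocking pair that contradicts the minimality of $i$. You instead read $M$ globally as a functional digraph $D$ on $V(G)$ (in- and out-degree at most $1$, hence a disjoint union of paths and cycles), reduce the claim to the statement that every arc lies on a $2$-cycle, and kill the nontrivial paths and the cycles of length at least $3$ via the non-blocking conditions on the acquainted reverse pairs. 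The linchpin you identify --- that the man-side clause $i_{s+2}<i_s$ at position $s$ is exactly the negation of the woman-side clause at position $s+1$, turning the whole system into an implication chain $Y_{s+1}\Rightarrow Y_s$ --- is sound (the indices on a component are distinct, so all comparisons are strict), and in the cyclic case either uniform truth value yields a cyclic chain of strict inequalities among the $i_s$ (one chain for odd $\ell$, two for even $\ell$), which is impossible; the path case is anchored by the forced failure of the woman-side clause at $s=1$ and of the man-side clause at $s=\ell-1$. Your reliance on the standing assumption that every $m_{v_1^i}$ is matched is legitimate, as that normalisation is made in the proof of Lemma~\ref{socstabletois} immediately before the claim (and is in fact only used to pin $m_{v_1^{i_\ell}}$ to his last choice, where being unmatched would serve equally well). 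What your approach buys is a cleaner global picture --- the cross pairs of $M$ decompose exactly into symmetric dyads, which is precisely the structure exploited in the next step of the lemma --- at the cost of introducing the auxiliary digraph; the paper's induction on the minimal index is more local and avoids the case split on component type, but its case analysis is harder to follow.
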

\begin{proof}
Notice first, that by our construction and the fact that $(m_{v_1^i}, w_{v_2^j})\in M$ we get that $v^j\in N_G(v^i)$. And hence $(m_{v_1^j}, w_{v_2^i})\in A(G')$.  Suppose now that the statement is not true and take the minimum $i$ such that $(m_{v_1^i}, w_{v_2^j})\in M$ for some $j\neq i$. We first have that $i<j$. For suppose that $j<i$, then if $m_{v_1^j}$ is matched to some $w_{v_2^k}$, for some $k\neq j$, that contradicts the minimal choice of $i$. Hence it must be that $m_{v_1^j}$ is matched to $w_{v_1^j}$. Since $(m_{v_1^j}, w_{v_2^i})\in A(G')$, and $m_{v_1^j}$ prefers $w_{v_2^i}$ over $w_{v_1^j}$,  in order for $M$ to be socially stable it must be that $w_{v_2^i}$ is matched to someone that she prefers to $m_{v_1^j}$ and that can only be some $m_{v_1^k}$ for some $k<j<i$, which contradicts the minimal choice of $i$. So indeed, it must be  that $i<j$.\\
Since $(m_{v_1^j},w_{v_2^i})\notin M$ and they are connected with an edge in $G'$, in order for $M$ to be socially stable it needs to be that at least one of $m_{v_1^j}$ and $w_{v_2^i}$ gets a match under $M$ that he/she prefers to the other.

Suppose that $w_{v_2^i}$ is matched to some $m_{v_1^k}$ for some $k<j$ (these are the only men that $w_{v_2^i}$ prefers to $m_{v_1^j}$, since her top choice, $m_{v_1^i}$, is matched with someone else). Then $(m_{v_1}^i,w_{v_2}^k)\in A(G')$ by construction and so they can not block $M$.  But $m_{v_1^i}$ prefers $w_{v_2^k}$ to $w_{v_2^j}$ (since $k<j$), and $w_{v_2^k}$ prefers $m_{v_1^i}$ to her current match unless her current match is a $m_{v_1^m}$ for some $m\neq k$ and $m<i$. But that would contradict the minimal selection of $i$. So it can't be the case that $w_{v_2^i}$ is matched under $M$ to someone she prefers to $m_{v_1^j}$.

Suppose now that $m_{v_1^j}$ is matched to some $w_{v_2^k}$ for some $k<i$. That would mean that $m_{v_1^k}$ is not matched to $w_{v_2^k}$ and if he is matched to some $w_{v_2^m}$ for some $m\neq k$, that would contradict the minimal choice of $i$. Hence $m_{v_1^k}$ must be matched to $w_{v_1^k}$. Since $m_{v_1^j}$ is matched to $w_{v_2^k}$ and $k\neq j$, by the way we constructed the preferences, that means that $(v^j, v^k)\in E(G)$ and hence $(m_{v_1^k},w_{v_2^j})\in A(G')$. But that contradicts the social stability of $M$ because $w_{v_2^j}$ prefers $m_{v_1^k}$ to $m_{v_2^i}$ (since $k<i$), $m_{v_1^k}$ prefers $w_{v_2^j}$ to $w_{v_1^k}$ (since $w_{v_1^k}$ is his last choice). Hence, it can't be the case that $m_{v_1^j}$ is matched under $M$ to someone he prefers to $w_{v_2^i}$.

So it needs to be that $(m_{v_1^j}, w_{v_2^i})\in M$. \end{proof}

We can now take all dyads of pairs of the form $\{(m_{v_1^i}, w_{v_2^j}), (m_{v_1^j}, w_{v_2^i})\} $ for $i\neq j$ that belong to $M$ and replace them with $\{(m_{v_1^i}, w_{v_2^i}),(m_{v_1^j}, w_{v_2^j})\}$. This will not break social stability of $M$ nor decrease it's size. We take now $S=\{v^i \mid m_{v_2^i}\in M\}$. Since all $n$ of $m_{v_1^i}$ are matched in $M$ there are $r$ couples that contain some $m_{v_2^i}$ and hence $|S|=|M|-n=r$. $S$ is an independent set because $m_{v_2^i}$ can only be matched to $w_{v_2^i}$ and hence $m_{v_1^i}$ is matched to $w_{v_1^i}$. The latter pairs are only stable if for every $v^j\in N_G(v^i)$ it holds that $m_{v_1^j}$ is matched to $w_{v_2^j}$ and hence $m_{v_2^j}\notin M$ which means that $v^j\notin S$ and hence $S$ is indeed an Independent Set.  \qed
\end{proof}

\noindent Lemmas \ref{istosocstable} and \ref{socstabletois} combined give us a second proof of NP-hardness of MAX SMISS. 

Austrin et al. in \cite{austrin2009inapproximability} show that assuming the unique games conjecture IND SET is hard to approximate within a factor of $O\left(\frac{d}{\log ^2d}\right)$ for Independent sets of size $k=\left(\frac{1}{2}-\Theta\left(\frac{\log(\log d)}{\log d}\right)\right)n$, where $n$ is the size of the vertex set and $d$ is an upper bound on the degree. Combining that result and Lemmas \ref{istosocstable} and \ref{socstabletois} we get the following theorem.

\begin{theorem1}
Assuming the UGC, MAX SMISS cannot be approximated within a $3/2-\varepsilon$, for any constant $\varepsilon>0$.
\end{theorem1}
\begin{proof}
The proof follows from the relationship between SMISS and IND SET shown by Lemmas \ref{istosocstable} and \ref{socstabletois} and the result from \cite{austrin2009inapproximability}. It's exactly the same as the proof of Corollary 4 of \cite{HW13}. We set $d=\delta n$ for some constant $\delta>0$ and we get that finding the maximum socially stable matching is hard to approximate within a factor of
\begin{equation}
\frac{n+\left(\frac{1}{2}-\Theta\left(\frac{\log(\log n)}{\log n}\right)\right)n}{n+\left(\frac{1}{2}-\Theta\left(\frac{\log(\log n)}{\log n}\right)\right)n\cdot O\left(\frac{\log n}{n}\right)}\geq \frac{3}{2}-\varepsilon
\end{equation}

for sufficiently large $n$.\qed
\end{proof}
%-------------------------------------------------------------------------------------------------------------------------------------------------------------------------------------------------------------------

%----------------------------------------------------------------------------------------------------------------------------------------------------
\section{Some special cases of HRSS}
\label{special}
Given the hardness results obtained for the problem of finding a maximum socially stable matching in a general HRSS instance, the need arises to investigate special cases of the problem that are tractable. This section describes some polynomial-time solvability results for three special cases of HRSS. Subsection \ref{2infty_sec} gives a polynomial-time algorithm for finding a maximum socially stable matching given an instance of $(2,\infty)$-MAX SMISS. In Subsection \ref{const_f_sec} we provide a polynomial-time algorithm for MAX HRSS in the case where there is a constant number of unacquainted pairs and in Subsection \ref{sec_const_adm} we also consider the case where the number of acquainted pairs is constant, again providing a polynomial time-algorithm for MAX HRSS in that context.

%----------------------------------------------------------------------------------------------------------------------------------------------------
\subsection{$(2, \infty)$-MAX SMISS}
\label{2infty_sec}
Given an SMISS instance $(I, G)$, where the men are allowed to have at most two women in their preference lists and each woman is allowed to have an unbounded-length preference list, we show that a maximum socially stable matching can be found in polynomial time. We make slight modifications to the algorithm used to find a maximum  stable matching in a $(2, \infty)$-MAX SMTI (the problem of finding a maximum stable matching given an SMTI instance where men are allowed to have at most two women on their preference list) instance described in \cite{IMO09}. The resulting algorithm, which we call \emph{$(2,\infty)$-MAX-SMISS-alg}, is broken down into three phases. 

In Phase 1, some pairs that cannot be involved in any socially stable matching in $(I, G)$ are \emph{deleted} from the instance. A pair $(m_i, w_j)$ is deleted by removing $m_i$ from $w_j$'s preference list and vice versa. We call the resulting preference lists the \emph{reduced preference lists}. For each man $m_i$, if the first woman $w_j$ on $m_i$'s preference list satisfies $(m_i, w_j) \in A$, where $A=E(G)$ is the set of acquainted pairs in $(I,G)$, we delete all pairs $(m_k, w_j)$ for all successors $m_k$ of $m_i$ on $w_j$'s preference list. $(m_k, w_j)$ cannot be involved in any socially stable matching as $(m_i, w_j)$ will socially block any matching they were involved in. 

In Phase 2, we construct a weighted bipartite graph $G'$ from the resulting instance with a reduced preference list. This is done by representing the men and women as nodes on the two sides of the graph and adding an edge between a man $m_i$ on one side and a woman $w_j$ on the other if $w_j$ appears on $m_i$'s preference list. The weight placed on the edge will be the position of $m_i$ on $w_j$'s preference list (denoted by $rank(w_j, m_i)$). Algorithm \ref{buildG} describes the process. A minimum cost maximum cardinality matching $M_{G'}$ in the the resulting bipartite graph is then generated using the algorithm described in \cite{GT89}.

At this stage it is not guaranteed that the resulting maximum matching $M_{G'}$ is socially stable. $M_{G'}$ may admit a social blocking pair $(m_i,w_j)$, where $(m_i,w_j)\in A$, $w_j$ is the first-choice partner of $m_i$, $m_i$ is assigned to his second-choice partner $w_k$ in $M_{G'}$ and $w_j$ is unassigned in $M_{G'}$ (as we will show later, this the only form of social blocking pair that $M_{G'}$ can admit). To remove such blocking pairs, during Phase 3, we assign $m_i$ to $w_j$ thus making $w_k$ unmatched. At this stage, $w_k$ may herself be the first-choice woman that forms a pair with some other man $m_l$ in the resulting matching such that $(m_l, w_k) \in A$. The process continues until there is no man who is matched to his second-choice woman while forming an edge in $G$ with his unmatched first-choice woman. Algorithm \ref{2infty} shows the entire \emph{$(2,\infty)$-MAX-SMISS-alg} algorithm.

\begin{algorithm}[t]
\small
\caption{$(2,\infty)$-MAX-SMISS-alg}
\label{2infty}
\begin{algorithmic}[1]
\STATE /* Phase 1 */
\WHILE{some man $m_i$ has a first-choice woman $w_j$ such that $(m_i, w_j) \in A$}
	\FOR {each successor $m_k$ of $m_i$ on $w_j$'s list}
		\STATE delete the pair $(m_k, w_j)$;
	\ENDFOR
\ENDWHILE

\STATE /* Phase 2 */
\STATE $G' := $ buildGraph();
\STATE $M_{G'}$ := minimum weight maximum matching in $G'$;

\STATE /* Phase 3 */
\STATE $M := M_{G'}$;
\WHILE {there exists a man $m_i$ who is matched to his second-choice woman $w_k$ \\ and his first-choice $w_j$ is an unmatched woman such that $(m_i, w_j) \in A$}
	\STATE $M := M \backslash \{(m_i, w_k)\}$; 
	\STATE $M := M \cup \{(m_i, w_j)\}$;
\ENDWHILE
\STATE  return $M$;
\end{algorithmic}
\normalsize
\end{algorithm}

\begin{algorithm}[t]
\small
\caption{buildGraph}
\label{buildG}
\begin{algorithmic}[1]
\STATE $V := \mathcal{U} \cup \mathcal{W}$; /* $\mathcal{U}$ and $\mathcal{W}$ are sets of men and women in $I$ */
\STATE $E' := \emptyset$;
\FOR {each man $m_i \in \mathcal{U}$}
	\FOR {each woman $w_j$ on $m_i$'s reduced list}
		\STATE $E' := E' \cup \{(m_i, w_j)\}$;
		\STATE $weight(m_i, w_j) := rank(w_j, m_i)$;
	\ENDFOR
\ENDFOR
\STATE $G' := (V, E')$;
\STATE  return $G'$;
\end{algorithmic}
\normalsize
\end{algorithm}

We now show that at the end of this phase, for $(2, \infty)$-MAX SMISS instances, the matching produced is both socially stable and of maximum size. 

\begin{lemma1}
\emph{$(2,\infty)$-MAX-SMISS-alg} terminates.
\label{2-max-hrf}
\end{lemma1}

\begin{proof}
It is easy to see that Phases $1$ and $2$ terminate. For every iteration of Phase $3$, one man always improves from his second to his first choice while no man obtains a worst partner or becomes unmatched.  Since the total number of possible improvements is finite, it is clear the phase is bound to terminate. \qed
\end{proof}

\begin{lemma1}
Phase 1 of \emph{$(2,\infty)$-MAX-SMISS-alg} never deletes a socially stable pair, which is a man-woman pair that belongs to some socially stable matching in $(I, G)$.
\label{2-max-hrf2}
\end{lemma1}

\begin{proof}
Suppose pair $(m_i, w_j)$ is deleted during some execution of \emph{$(2,\infty)$-MAX-SMISS-alg} and $(m_i, w_j) \in M$ where $M$ is some socially stable matching in $(I, G)$. Then $m_i$ was deleted from $w_j$'s preference list because $w_j$ was the first-choice woman of some man $m_k$ such that $(m_k, w_j) \in A$ and $w_j$ prefers $m_k$ to $m_i$. Therefore $(m_k, w_j)$ will socially block $M$, a contradiction. \qed
\end{proof}

\begin{lemma1}
The matching returned by \emph{$(2,\infty)$-MAX-SMISS-alg} is socially stable in $(I, G)$.
\label{2-max-hrf3}
\end{lemma1}

\begin{proof}
Suppose the matching $M$ produced by \emph{$(2,\infty)$-MAX-SMISS-alg} is not socially stable in $(I, G)$. Then some pair $(m_i, w_j) \in A$ socially blocks $M$ in $I$. For this to occur, one of the following cases must arise.

Case (i): $m_i$ and $w_j$ are unmatched in $M$. Then $m_i$ is unmatched in $M_{G'}$ and either $w_j$ was initially unmatched in $M_{G'}$ or became unmatched due to some operation in Phase 3 of \emph{$(2,\infty)$-MAX-SMISS-alg}. If $w_j$ was initially unmatched in $M_{G'}$ then $M_{G'}$ could have been increased in size by adding $(m_i, w_j)$ thus contradicting the fact that $M_{G'}$ is of maximum cardinality. Suppose $w_j$ became unmatched due to Phase 3. Let $m_{p_1}$ denote $w_j$'s partner in $M_{G'}$. During Phase 3, $m_{p_1}$ must have become assigned to his first-choice woman $w_{q_1}$. Suppose $w_{q_1}$ was unmatched in $M_{G'}$. Then  a larger matching can be obtained by augmenting $M_{G'}$ along the path $(m_i, w_j), (w_j, m_{p_1}), (m_{p_1}, w_{q_1})$ contradicting to the fact that $M_G$ is of maximum cardinality. Thus $w_{q_1}$ must have been assigned in $M_{G'}$ and became unmatched during Phase 3 as well. If $w_{q_1}$ was assigned to $m_{p_2}$ in $M_{G'}$, $m_{p_2}$ must have become assigned to his first-choice woman $w_{q_2}$ during Phase 3. Using a similar argument to that used for $w_{q_1}$, we can argue that $w_{q_2}$ must have been assigned in $M_{G'}$ as well. Thus some man moved from $w_{q_2}$ to his first-choice woman. This process may be continued and at each iteration of Phase 3, some man must strictly improve and no man becomes worse off. Since the possible number of such improvements is finite, there are a finite number of women that can be unmatched in this way in Phase 3. Hence at some point, there exists a man $m_{p_s}$, who switches to his first-choice woman $w_{q_s}$ and $w_{q_s}$ was already unmatched in $M_{G'}$. We can then construct an augmenting path in $G'$ of the form $(m_i, w_j), (w_j, m_{p_1}), (m_{p_1}, w_{q_1}),(w_{q_1}, m_{p_2}), (m_{p_2}, w_{q_2}),...,(m_{p_s}, w_{q_s})$ which contradicts the fact that $M_{G'}$ is of maximum cardinality. 

Case (ii): $m_i$ is unmatched in $M$ and $w_j$ prefers $m_i$ to $m_l = M(w_j)$. Then $m_i$ is unmatched in $M_{G'}$ as well. Suppose that $w_j$ was assigned to $m_l$ in $M_{G'}$. Since $w_j$ prefers $m_i$ to $m_l$, a matching of equal size to $M_{G'}$ but with a lower weight can be obtained by pairing $m_i$ with $w_j$, leaving $m_l$ unmatched, thus contradicting the fact that $M_{G'}$ is a minimum weight maximum cardinality matching. Thus $w_j$ is not assigned to $m_l$ in $M_{G'}$. Then $w_j$ is either unmatched in $M_{G'}$ or is assigned in $M_{G'}$ to some man $m_p$, where $m_p \neq m_l$ and $m_p \neq m_i$. If $w_j$ is unmatched in $M_{G'}$, the fact that $M_{G'}$ is a maximum cardinality matching is contradicted. If $w_j$ is assigned to $m_p$ in $M_{G'}$ then since $w_j$ is no longer assigned to $m_p$ in $M$, $m_p$ must have switched to his first-choice woman $w_q$ during Phase 3. Hence either $w_q$ was unmatched in $M_{G'}$ or $w_q$ became unmatched due to some man being switched to his first-choice woman. Using a similar argument as in case (i), we can construct an augmenting path that contradicts the fact that $M_{G'}$ is of maximum cardinality. 

Case (iii): $m_i$ is assigned to $w_k$ in $M$ and $m_i$ prefers $w_j$ to $w_k$  and $w_j$ is unmatched in $M$. Since $m_i$'s preference list is of length 2, $w_j$ is $m_i$'s first-choice and $(m_i, w_j) \in A$ and so this case satisfies the loop condition in Phase 3 and thus should never arise once Phase 3 has terminated (as it must, by Lemma \ref{2-max-hrf}).

Case (iv): $m_i$ is assigned to $w_k$ in $M$ and $m_i$ prefers $w_j$ to $w_k$ and $w_j$ is assigned to $m_l$ in $M$ and $w_j$ prefers $m_i$ to $m_l$. Once again, since $m_i$'s list is of length 2, then $w_j$ must be $m_i$'s first-choice and $(m_i, w_j) \in A$. Therefore the loop condition of Phase 1 would have ensured that man $m_l$ was deleted from $w_j$'s preference list. \qed
\end{proof}

Since, by Lemma \ref{2-max-hrf2}, Phase 1 never deletes a socially stable pair, a maximum socially stable matching must consist of pairs that belong to the reduced lists. Since $M_{G'}$ is a maximum matching and Phase 3 never reduces the size of the matching, it follows that the matching produced by the algorithm is of maximum cardinality. Finally since by Lemma \ref{2-max-hrf3}, the matching produced is a socially stable matching, it follows that the algorithm produces a maximum socially stable matching  in $(I, G)$. 

The complexity of the algorithm is dominated by Phase 2. The complexity of the algorithm for finding the minimum cost maximum matching in $G' = (V, E')$ is $O(\sqrt{|V|} |E'| \log{|V|})$ \cite{GT89}. Let $n=|V| = n_1 + n_2$ be the total number of men and women. Since the set $\mathcal A$ of acceptable pairs satisfies $|\mathcal{A}| \leq 2n_1 = O(n)$, it follows that $(2, \infty)$-MAX SMISS has a time complexity of $O(n^{3/2}\log{n})$. We have thus proved the following theorem. 

\begin{theorem1}
Given an instance $(I, G)$ of $(2, \infty)$-MAX SMISS, Algorithm $(2,\infty)$-MAX-SMISS-alg generates a maximum socially stable matching in $O(n^{3/2}\log{n})$ time, where $n$ is the total number of residents and hospitals in $(I, G)$.
\end{theorem1}

%----------------------------------------------------------------------------------------------------------------------------------------------------
\subsection{HRSS with a constant number of unacquainted pairs}
\label{const_f_sec}
It is easy to see that in the special case where the set $U$ of unacquainted pairs is exactly the set $\mathcal A$ of acceptable pairs in the underlying HR instance, then the set $A$ of acquainted pairs satisfies $A = \emptyset$ and every matching found is a socially stable matching. Also if the instance contains no unacquainted pairs (i.e., $A = \mathcal{A}$ and $U = \emptyset$), then only stable matchings in the classical sense are socially stable. In both these cases, a maximum socially stable matching can be generated in polynomial time. The case may however arise where the number of unacquainted pairs is constant. In this case, we show that it is also possible to generate a maximum socially stable matching in polynomial time. 

Let $(I, G)$ be an instance of HRSS and let $S \subseteq \mathcal{A}$ be a subset of the acceptable pairs in $I$. We denote $I \backslash S$ as the instance of HR obtained from $I$ by deleting the pairs in $S$ from the preference lists in $I$. The following proposition plays a key role in establishing the correctness of the algorithm. 

\begin{proposition1}
\label{prep_const_free}
Let $(I, G)$ be an instance of HRSS. Let $M$ be a socially stable matching in $(I, G)$. Then there exists a set of  unacquainted pairs $U' \subseteq U$ such that $M$ is stable in $I' = I \backslash U'$. Conversely suppose that $M$ is a stable matching in $I' = I \backslash U'$ for some $U' \subseteq U$. Then $M$ is socially stable in $(I, G)$. 
\end{proposition1}

\begin{proof}
Suppose $M$ is socially stable in $(I, G)$. Let $U' = U \backslash (M \cap U) = U \backslash M$. We claim that $M$ is stable in $I' = I \backslash U'$. Suppose $(r_i, h_j)$ blocks $M$ in $I'$. Then $(r_i, h_j) \notin M$ and edges in $I'$ are those in $A \cup M \cap U$ and $(r_i, h_j) \notin M \cap U$. Thus $(r_i, h_j) \in A$ making $M$ socially unstable in $(I, G)$, a contradiction. 

Conversely, suppose that $M$ is stable in $I' = I \backslash U'$ for some $U'\subseteq U$. We claim that $M$ is socially stable in $(I, G)$. Suppose that $(r_i, h_j)$ socially blocks $M$ in $(I, G)$, then $(r_i, h_j) \in A$ so $(r_i, h_j) \notin U$. Thus $(r_i, h_j)$ is a pair in $I'$ and so $(r_i, h_j)$ blocks $M$ in $I'$, a contradiction.  
\qed
\end{proof}

By considering all subsets  $U' \subseteq U$, forming $I'$, finding a stable matching in each such $I'$ and keeping a record of the maximum stable matching found, we obtain a maximum socially stable matching in $(I, G)$. This discussion leads to the following theorem.

\begin{theorem1}
\label{th_const_free}
Given an instance $(I, G)$ of HRSS where the set $U$ of unacquainted pairs is of constant size, a maximum socially stable matching can be generated in $O(m)$ time, where $m=|\mathcal{A}|$ is the number of acceptable pairs.
\end{theorem1}

\begin{proof}
By considering all subsets  $U' \subseteq U$, forming $I'$, finding stable a matching in each such $I'$ and keeping a record of the maximum stable matching found, we obtain a maximum socially stable matching in $(I, G)$. Let $M$ be the largest stable matching found over all $I' = I \backslash U'$ where $U'\subseteq U$. Then by Proposition \ref{prep_const_free}, $M$ is socially stable in $(I, G)$. Suppose for a contradiction that $M'$ is a socially stable matching in $I$ where $|M'| > |M|$. Then by Proposition \ref{prep_const_free} there exists some $U'' \subseteq U$ such that $M'$ is stable in $I''$. But that contradicts $M$ as the largest stable matching in $I \backslash U'$ taken over all $U' \subseteq U$.

Finding all subsets of $U$ can be done in $O(2^k)$ time where $k = |U|$. Forming $I'$ and finding a stable matching in $I'$ can be done in $O(|\mathcal{A}|)$ time where $\mathcal{A}$ is the set of acceptable pairs in $I$. Hence the overall time complexity of the algorithm is $O(2^k  |\mathcal{A}|)$ which is polynomial with respect to the instance size as $k$ is a constant.
\qed
\end{proof}

%----------------------------------------------------------------------------------------------------------------------------------------------------
\subsection{HRSS with a constant number of acquainted pairs}
\label{sec_const_adm}
We now consider the restriction of HRSS in which the set $A$ of acquainted pairs is of constant size $k$.  Given an instance $(I, G)$ of this problem we show that a maximum socially stable matching can be found in polynomial time. Let $A = \{e_1, e_2, ..., e_k\}$ where $e_i$ represents an acquainted pair $(r_{s_i}, h_{t_i})$ $(1 \leq i \leq k)$. A tree $T$ of depth $k$ is constructed with all nodes at depth $i$ labelled $e_{i+1}$ $(i \geq 0)$. There are left and right branches below $e_i$. Each branch corresponds to a condition placed on $r_{s_i}$ or $h_{t_i}$ with respect to a matching $M$. The left branch below $e_i$ (i.e., a resident condition branch) corresponds to the condition that $r_{s_i}$ is matched in $M$ and prefers his partner to $h_{t_i}$. The right branch below $e_i$ (i.e., a hospital condition branch) corresponds to the condition that $h_{t_i}$ is fully subscribed in $M$ and has a partner no worse than $r_{s_i}$. Satisfying at least one of these conditions ensures that $M$ admits no blocking pair involving $(r_{s_i}, h_{t_i})$. The tree is constructed in this manner with the nodes at depth $k-1$, labelled $e_k$, branching in the same way to dummy leaf nodes $e_{k+1}$ (not representing acquainted pairs).

A path $P$ from the root node $e_1$ to a leaf node $e_{k+1}$ will visit all pairs in $A$ exactly once. Every left branch in $P$ gives a resident condition and every right branch gives a hospital condition. Let $R'$ and $H'$ be the set of residents and hospitals involved in resident and hospital conditions in $P$ respectively. Given a matching $M$, enforcing all the conditions along $P$ can be achieved by first deleting all pairs from the instance $I$ that could potentially violate these conditions. So if some resident condition along $P$ states that a resident $r_{s_i}$ must be matched in $M$ to a hospital he prefers to $h_{t_i}$ then $r_{s_i}$'s preference list is truncated starting with $h_{t_i}$. If some hospital condition states that a hospital $h_{t_i}$ must be fully subscribed in $M$ and must not be matched to a resident worse than $r_{s_i}$ then $h_{t_i}$'s preference list is truncated starting from the resident immediately following $r_{s_i}$.  After performing these truncations based on the conditions along $P$, a new HR instance $I'$ is obtained. 

\begin{proposition1}
\label{prep_const_adm}
If $M$ is a matching in $I'$ that is computed at the leaf node of a path $P$ and all residents in $R'$ are matched in  $M$ and all hospitals in $H'$ are fully subscribed in $M$ then $M$ is a socially stable matching in $(I, G)$. 
\end{proposition1}

\begin{proof}
Suppose some resident-hospital pair socially blocks $M$ in $(I, G)$. Then this pair belongs to $A$ so it corresponds to a node $e_i = (r_{s_i}, h_{t_i})$ in $T$ for some $i$ $(1 \leq i \leq k)$. So in $M$, either (i) $r_{s_i}$ is unmatched or prefers $h_{t_i}$ to $M(r_{s_i})$ and (ii) either $h_{t_i}$ is undersubscribed or prefers $r_{s_i}$ to $M(h_{t_i})$. Suppose in $T$, the left hand branch from $e_i$ was chosen when following a path $P$ from the root to a leaf node. Then $r_{s_i} \in R'$, $r_{s_i}$ is matched and by the truncations carried out, $r_{s_i}$ has a better partner than $h_{t_i}$, a contradiction. Thus the right hand branch from $e_i$ was chosen when following $P$. Then $h_{t_i} \in H'$, $h_{t_i}$ is fully subscribed and by the truncations $h_{t_i}$ has no partner worse than $r_{s_i}$. But $(r_{s_i}, h_{t_i}) \notin M$, so $h_{t_i}$ has a partner better than $r_{s_i}$, a contradiction. \qed
\end{proof}

With $I'$ obtained due the truncations carried out by satisfying conditions along a path $P$ from the root node to a leaf node, we then seek to obtain a matching in which all the residents in $R'$ are matched and hospitals in $H'$ are fully subscribed. For each hospital $h_j$ in $I'$, we define a set of clones of $h_j$, $\{h_{j,1}, h_{j,2},...,h_{j,{c_j}}\}$, corresponding to the number of posts $c_j$ available in $h_j$. Let $H'' = \{h_{j,k}:h_j \in H' \wedge 1 \leq k \leq c_j \}$ denote the set of all clones obtained from hospitals in $H'$. We define a bipartite graph $G'$ where one set  of nodes is represented by the set of residents in $I'$ and the other set of nodes is represented by the hospital clones in $I'$. If $r_i$ finds $h_j$ acceptable in $I$, we add an edge from $r_i$ to $h_{j,q}$ in $G'$ for all $q~(1 \leq q \leq c_j)$. We define a new graph $G''$ containing the same nodes and edges in $G'$ with  weights placed on the edges. We mark all the nodes representing the residents in $R'$ and the hospital clones in $H''$ as red nodes with the remaining nodes uncoloured. We place weights on the edges as follows: (i) an edge between a red node and an uncoloured node is given a weight of $1$; (ii) an edge between two red nodes is given weight of $2$; (iii) an edge between two uncoloured nodes is given a weight of $0$. We then find a maximum weight matching $M'$ in the resulting weighted bipartite graph $G''$. 
Let $wt(M')$ denote the weight of a matching $M'$ in $G''$. Then 

\begin{figure}[!h]

\begin{minipage}{\linewidth}
\centering
$wt(M') = |\{(r_i, h_{j,q}) \in M' : r_i \in R'\}| + |\{(r_i, h_{j,q}) \in M' : h_{j,q} \in H''\}| \leq |R'| + |H''|$.
\end{minipage}

\end{figure}  

Moreover $wt(M') = |R'| + |H''|$ if and only if every agent in $R' \cup H''$ is matched in $M'$. For such a matching $M'$ in $G''$, by construction, $M'$ is also a matching in $G'$ and a maximum cardinality matching $M''$ can be obtained in $G'$ by continuously augmenting $M'$ until no augmenting path can be found. Since any node already matched in $M'$ will remain matched in $M''$ it follows that all the residents and hospital clones in $R' \cup H''$ will be matched in $M''$ and such a matching, by Proposition \ref{prep_const_adm}, will be socially stable in $(I, G)$. If however, $wt(M') < |R'| + |H''|$ then some resident or hospital clone in $R' \cup H''$ remains unmatched in any maximum matching in $G'$ introducing the possibility of a social blocking pair of $M'$ in $(I, G)$. In this case, $P$ is ruled as infeasible and another path is considered, otherwise $P$ is called feasible.

There are $2^k$ paths from the root node to leaf nodes in the tree $T$. The following proposition is important to our result. 

\begin{proposition1}
\label{prep_const_adm2}
There must exist at least one feasible path in $T$.
\end{proposition1}

\begin{proof}
Let $M$ be a stable matching in $I$. Then $M$ is socially stable in $(I, G)$. Consider each node $e_i = (r_{s_i}, h_{t_i})$ in $T$. If $(r_{s_i}, h_{t_i}) \in M$, we branch right at $e_i$. If $(r_{s_i}, h_{t_i}) \notin M$, and $r_{s_i}$ is matched and has a partner better than $h_{t_i}$, we branch left at $e_i$. If $(r_{s_i}, h_{t_i}) \notin M$, and $h_{t_i}$ is fully subscribed and has no partner worse than $r_{s_i}$, we branch right at $e_i$. Any other condition would mean that $(r_{s_i}, h_{t_i})$ would block $M$ in $I$, a contradiction. This process, starting from the root node $e_1$, gives a feasible path $P$ through $T$ to some leaf node $v$ where, for the set of residents $R'$ and hospitals $H'$ involved in $P$, $wt(M) = |R'| + |H'|$. \qed
\end{proof}

To generate a maximum socially stable matching $M$ in an instance $(I, G)$ of HRSS, all $2^k$ paths through $T$ from the root node to leaf nodes are considered with a record kept of the maximum matching $M''$ computed at the leaf node of each feasible path. The desired matching $M$ can then be constructed by letting
 \[M= \{(r_i, h_j) : (r_i, h_{j,k}) \in M'' \mbox{ for some $k$ } (1 \leq k \leq c_j)\}.\]

\begin{proposition1}
\label{prep_const_adm3}
If $M$ is a matching obtained from the process described above, $M$ is a maximum socially stable matching in (I, G).
\end{proposition1}

\begin{proof}
Proposition \ref{prep_const_adm} shows that $M$ is socially stable in $(I, G)$. Suppose $M'$ is a socially stable matching in $(I, G)$ such that $|M'| > |M|$. Construct a feasible path $P$ through $T$ from the root node to a leaf node $v$, branching left or right at each node $e_i = (r_{s_i}, h_{t_i})$ as follows. If $r_{s_i}$ is matched in $M'$ to some hospital better than $h_{t_i}$, branch left. Otherwise as $(r_{s_i}, h_{t_i})$ is not a social blocking pair of $M'$, $h_{t_i}$ is fully subscribed in $M'$ with no assignee worse than $r_{s_i}$, in which case branch right. As before, construct sets $R'$ and $H'$ as follows. For every left branch in $P$ involving a resident $r_{s_i}$, add $r_{s_i}$ to $R'$ and for every right branch in $P$ involving a hospital $h_{t_i}$, add $h_{t_i}$ to $H'$. Matching $M'$ then satisfies the property that every resident $r_{s_i} \in R'$ is matched in $M'$ to a hospital better than $h_{t_i}$ and every hospital $h_{t_i} \in H'$ is fully subscribed with residents no worse than $r_{s_i}$. At the leaf node $v$ of $P$, the algorithm constructs a matching $M''$ which is of maximum cardinality with respect to the restrictions that every resident $r_{s_i} \in R'$ is matched to a hospital better than $h_{t_i}$ and every hospital $h_{t_i} \in H'$ is fully subscribed with residents no worse than $r_{s_i}$. Hence $|M''| \geq |M'|$ and since $|M'| > |M|$, it follows that $M''$ contradicts the choice of $M$ as the largest matching taken over all leaf nodes. \qed
\end{proof}

The above proposition leads to the following main result of this subsection.

\begin{theorem1}
\label{max-matching-fixed-a}
Given an instance $(I, G)$ of HRSS where the set $A$ of acquainted pairs satisfies $|A| = k$ for some constant $k$, a maximum socially stable matching can be generated in $O(c_{max}m\sqrt{n_1+C})$ time where $n_1$ is the number of residents, $m$ is the number of acceptable pairs, $c_{max}$ is the largest capacity of any hospital and $C$ is the total capacity of all the hospitals in the problem instance. 
\end{theorem1}

\begin{proof}
Since $|A|=k$ is constant, the number of leaf nodes (and subsequently the number of paths from the root node to a leaf node) is also constant ($2^k$).  Performing the truncations imposed by the conditions along a path can be done in $O(m)$ time where $m$ is the number of acceptable pairs in $I$. The number of nodes $n'$ and the number of edges $m'$ in $G'$ are given by:
\begin{eqnarray*}
\vspace{-0.5 cm}
\centering
\begin{minipage}{350px}
$m' = \sum\limits_{j=1}^{n_2} |pref(h_j)| c_j \leq c_{max}\sum\limits_{j=1}^{n_2} |pref(h_j)| \leq c_{max}m$\\
$n' = n_1 + \sum\limits_{j=1}^{n_2} c_j = n_1+C$
\end{minipage}
\vspace{-0.5 cm}
\end{eqnarray*}

\noindent
where $n_1$ is the number of residents, $n_2$ is the number of hospitals, $pref(h_j)$ is the set of residents in $h_j$'s preference list, $c_{max}$ is the largest capacity of any hospital and $C$ is the total number of posts in the problem instance. 
Finding a maximum weight matching in $G''$ can be done in $O(m'\sqrt{n'})$ time \cite{DSU12} (since the edge weights have $O(1)$ size). Augmenting such a matching to a maximum cardinality matching in $G'$ can be done in $O(m'\sqrt{n'})$ time \cite{HK73}. Thus the time complexity of the algorithm is $O(c_{max}m\sqrt{n_1+C})$.
\qed
\end{proof}

Following the results in Theorems \ref{th_const_free} and \ref{max-matching-fixed-a}, we conclude this section with the theorem below showing the existence of FPT algorithms for MAX HRSS under two different parameterisations.

\begin{theorem1}
\label{fpt}
MAX HRSS is in FPT with parameter $k$, where either $k=|A|$ or $k=|U|$, and $A$ and $U$ are the sets of acquainted and unacquainted pairs respectively.
\end{theorem1}

%----------------------------------------------------------------------------------------------------------------------------------------------------
\section{Open problems}
\label{open}
The study of the Hospitals / Residents problem under Social Stability is still at an early stage, and some interesting open problems remain. Firstly it is worth investigating the complexity of finding socially stable matchings with additional optimality properties (such as egalitarian / minimum regret socially stable matchings \cite{ILG87,Gus87}). 
It may also be interesting to determine whether the set of socially stable matchings in an HRSS instance admits any structure similar to that present in the case of stable matchings in HR. Also extending the algorithm for  $(2, \infty)$-MAX SMISS to the HRSS case remains open.  The HRSS model described in this paper does not consider the scenario where ties exist in the preference lists of agents. It remains open to investigate this variant of the problem. 

We close by remarking that currently in the HRSS context, if $(r,h)$ is a blocking pair of a matching $M$ where $h$ is undersubscribed and $(r,h)\in U$, then $(r,h)$ is not a social blocking pair of $M$.  However it could be argued that information about undersubscribed hospitals would be in the public domain, and hence the social blocking pair definition should not require that $r$ and $h$ have a social tie in this case.  It would be interesting to investigate algorithmic aspects of this variant of HRSS.

%----------------------------------------------------------------------------------------------------------------------------------------------------
\section*{Acknowledgments}
\label{ack}
The fourth author is supported by grant EP/K010042/1 from the Engineering and Physical Sciences Research Council. We would like to thank Martin Hoefer for making the third and fourth authors aware of \cite{AIP13}; Zolt\'an Kir\'{a}ly for ideas that led to Theorem \ref{hrsn_reduction}, for observing Theorem \ref{fpt} and for other valuable comments; and Rob Irving and an anonymous referee for further valuable suggestions concerning this paper. 
%----------------------------------------------------------------------------------------------------------------------------------------------------
\bibliography{matching_db}

\begin{thebibliography}{10}

\bibitem{APR05a}
A.~Abdulkadiro\mbox{\v glu}, P.A. Pathak, and A.E. Roth.
\newblock The {B}oston public school match.
\newblock {\em American Economic Review}, 95(2):368--371, 2005.

\bibitem{APR05}
A.~Abdulkadiro\mbox{\v glu}, P.A. Pathak, and A.E. Roth.
\newblock The {N}ew {Y}ork {C}ity high school match.
\newblock {\em American Economic Review}, 95(2):364--367, 2005.

\bibitem{ES09b}
E.~Arcaute and S.~Vassilvitskii.
\newblock Social networks and stable matchings in the job market.
\newblock In {\em Proceedings of WINE '09: the 5th International Workshop on
  Internet and Network Economics}, volume 5929 of {\em Lecture Notes in
  Computer Science}, pages 220--231. Springer, 2009.

\bibitem{AIP13}
G.~Askalidis, N.~Immorlica, and E.~Pountourakis.
\newblock Socially stable matchings.
\newblock Technical Report 1302.3309, Computing Research Repository, Cornell
  University Library, 2013.
\newblock Available from \url{http://arxiv.org/abs/1302.3309}.

\bibitem{austrin2009inapproximability}
P.~Austrin, S.~Khot, and M.~Safra.
\newblock Inapproximability of vertex cover and independent set in bounded
  degree graphs.
\newblock In {\em Computational Complexity, 2009. CCC'09. 24th Annual IEEE
  Conference on}, pages 74--80. IEEE, 2009.

\bibitem{BMM10}
P.~Bir\'o, D.F. Manlove, and S.~Mittal.
\newblock Size versus stability in the marriage problem.
\newblock {\em Theoretical Computer Science}, 411:1828--1841, 2010.

\bibitem{CF09}
K.~Cechl\'{a}rov\'{a} and T.~Fleiner.
\newblock Stable roommates with free edges.
\newblock Technical Report 2009-01, Egerv\'{a}ry Research Group on
  Combinatorial Optimization, Budapest, 2009.

\bibitem{CM12}
C.T. Cheng and E.~McDermid.
\newblock Maximum locally stable matchings.
\newblock In {\em Proceedings of MATCH-UP '12: the 2nd International Workshop
  on Matching Under Preferences}, pages 51--62, 2012.

\bibitem{DSU12}
R.~Duan and H.-H. Su.
\newblock A scaling algorithm for maximum weight matching in bipartite graphs.
\newblock In {\em Proceedings of SODA '12: the 23rd ACM-SIAM Symposium on
  Discrete Algorithms}, pages 1413--1424. ACM-SIAM, 2012.

\bibitem{GT89}
H.N. Gabow and R.E. Tarjan.
\newblock Faster scaling algorithms for network problems.
\newblock {\em SIAM Journal on Computing}, 18:1013--1036, 1989.

\bibitem{GS62}
D.~Gale and L.S. Shapley.
\newblock College admissions and the stability of marriage.
\newblock {\em American Mathematical Monthly}, 69:9--15, 1962.

\bibitem{Gus87}
D.~Gusfield.
\newblock Three fast algorithms for four problems in stable marriage.
\newblock {\em SIAM Journal on Computing}, 16(1):111--128, 1987.

\bibitem{GI89}
D.~Gusfield and R.W. Irving.
\newblock {\em The Stable Marriage Problem: Structure and Algorithms}.
\newblock MIT Press, 1989.

\bibitem{HIMY07}
M.~Halld\'{o}rsson, K.~Iwama, S.~Miyazaki, and H.~Yanagisawa.
\newblock Improved approximation of the stable marriage problem.
\newblock {\em ACM Transactions on Algorithms}, 3(3), 2007.
\newblock Article number 30.

\bibitem{Hoe13}
M.~Hoefer.
\newblock Local matching dynamics in social networks.
\newblock {\em Information and Computation}, 222:20--35, 2013.

\bibitem{HW13}
M.~Hoefer and L.~Wagner.
\newblock Locally stable marriage with strict preferences.
\newblock In {\em Proceedings of ICALP 2013: the 40th International Colloquium
  on Automata, Languages and Programming}, Lecture Notes in Computer Science.
  Springer, 2013.
\newblock To appear.

\bibitem{HK73}
J.E. Hopcroft and R.M. Karp.
\newblock A $n^{5/2}$ algorithm for maximum matchings in bipartite graphs.
\newblock {\em SIAM Journal on Computing}, 2:225--231, 1973.

\bibitem{Irv98}
R.W. Irving.
\newblock Matching medical students to pairs of hospitals: a new variation on a
  well-known theme.
\newblock In {\em Proceedings of ESA '98: the Sixth European Symposium on
  Algorithms}, volume 1461 of {\em Lecture Notes in Computer Science}, pages
  381--392. Springer, 1998.

\bibitem{ILG87}
R.W. Irving, P.~Leather, and D.~Gusfield.
\newblock An efficient algorithm for the ``optimal'' stable marriage.
\newblock {\em Journal of the ACM}, 34(3):532--543, 1987.

\bibitem{IMO09}
R.W. Irving, D.F. Manlove, and G.~O'Malley.
\newblock Stable marriage with ties and bounded length preference lists.
\newblock {\em Journal of Discrete Algorithms}, 7(2):213--219, 2009.

\bibitem{IMY07}
K.~Iwama, S.~Miyazaki, and N.~Yamauchi.
\newblock A 1.875--approximation algorithm for the stable marriage problem.
\newblock In {\em Proceedings of SODA '07: the Eighteenth ACM/SIAM Symposium on
  Discrete Algorithms}, pages 288--297. ACM-SIAM, 2007.

\bibitem{ZK12}
Z.~Kir\'aly.
\newblock Linear time local approximation algorithm for maximum stable
  marriage.
\newblock In {\em Proceedings of MATCH-UP '12: the 2nd International Workshop
  on Matching Under Preferences}, pages 99--110, 2012.

\bibitem{KH78}
B.~Korte and D.~Hausmann.
\newblock An analysis of the greedy heuristic for independence systems.
\newblock In {\em Annals of Discrete Mathematics}, volume~2, pages 65--74.
  North-Holland, 1978.

\bibitem{Man13}
D.F. Manlove.
\newblock {\em Algorithmics of {M}atching {U}nder {P}references}.
\newblock World Scientific, 2013.

\bibitem{MIIMM02}
D.F. Manlove, R.W. Irving, K.~Iwama, S.~Miyazaki, and Y.~Morita.
\newblock Hard variants of stable marriage.
\newblock {\em Theoretical Computer Science}, 276(1-2):261--279, 2002.

\bibitem{McD09}
E.~McDermid.
\newblock A 3/2 approximation algorithm for general stable marriage.
\newblock In {\em Proceedings of ICALP '09: the 36th International Colloquium
  on Automata, Languages and Programming}, volume 5555 of {\em Lecture Notes in
  Computer Science}, pages 689--700. Springer, 2009.

\bibitem{Rot84}
A.E. Roth.
\newblock The evolution of the labor market for medical interns and residents:
  a case study in game theory.
\newblock {\em Journal of Political Economy}, 92(6):991--1016, 1984.

\bibitem{RS90}
A.E. Roth and M.A.O. Sotomayor.
\newblock {\em Two-sided matching: a study in game-theoretic modeling and
  analysis}, volume~18 of {\em Econometric Society Monographs}.
\newblock Cambridge University Press, 1990.

\end{thebibliography}

\end{document}